\DeclareMathAccent{\wtilde}{\mathord}{largesymbols}{"65}
\DeclareMathAccent{\what}{\mathord}{largesymbols}{"62}
\def\m@th{\mathsurround=0pt}
\mathchardef\bracell="0365
\def\upbrall{$\m@th\bracell$}
\def\undertilde#1{\mathop{\vtop{\ialign{##\crcr
    $\hfil\displaystyle{#1}\hfil$\crcr
     \noalign
     {\kern1.5pt\nointerlineskip}
     \upbrall\crcr\noalign{\kern1pt
   }}}}\limits}
\def\m@th{\mathsurround=0pt}
\mathchardef\bracelll="0362
\def\upbralll{$\m@th\bracelll$}
\def\underhat#1{\mathop{\vtop{\ialign{##\crcr
    $\hfil\displaystyle{#1}\hfil$\crcr
     \noalign
     {\kern1.5pt\nointerlineskip}
     \upbralll\crcr\noalign{\kern1pt
   }}}}\limits}
\newcommand{\uh}[1]{ \underhat {#1}}
\newcommand{\ub}[1]{ \underline {#1}}
\newcommand{\oh}[1]{ \what {#1}}
\newcommand{\ot}[1]{ \wtilde {#1}}
\newcommand{\ob}[1]{ \overline {#1}}
\newcommand{\ud}{\mathrm{d}}
\newcommand{\arctanh}{\, \mathrm{arctanh}}
\newcommand{\ox} {\mathbf{x}}
\newcommand{\oI} {\mathbf{I}}
\newcommand{\oX} {\mathbf{X}}
\newcommand{\bB} {\boldsymbol B}
\newcommand{\bn} {\boldsymbol n}
\newcommand{\bu} {\boldsymbol u}
\newcommand{\mca} {\mathfrak a}
\newcommand{\mcD} {\mathcal D}
\newcommand{\mcL} {\mathcal L}
\newcommand{\mcN} {\mathcal N}
\newcommand{\mcS} {\mathcal S}
\newcommand{\ol}[1]{ \ob {#1}}
\newcommand{\wh}[1]{ \oh {#1}}
\newtheorem{prop}{Proposition}
\newdefinition{rmk}{Remark}
\newproof{pf}{Proof}
\title{Quantum Variational Principle and Quantum Multiform Structure: the Case of Quadratic Lagrangians}
\author{S. D. ~King \corref{cor1}\fnref{fn1}}
\author{F. W. ~Nijhoff\fnref{fn2}}
\address{School of Mathematics, University of Leeds, Leeds, United Kingdom LS2 9JT}
\date{\today}
\begin{document}

\begin{abstract}

A modern notion of integrability is that of multidimensional consistency (MDC), 
which classically implies the coexistence of (commuting) dynamical flows in several independent 
variables for one and the same dependent variable. This property holds for both continuous 
dynamical systems as well as for discrete ones defined in discrete space-time. 
Possibly the simplest example in the discrete case is that of a linear quadrilateral lattice 
equation, which can be viewed as a linearised version of the well-known lattice potential 
Korteweg-de Vries (KdV) equation. In spite of the linearity, the MDC property is non-trivial in terms of the 
parameters of the system. The Lagrangian aspects of such equations, and their nonlinear 
analogues, has led to the notion of Lagrangian multiform structures, where the Lagrangians 
are no longer scalar functions (or volume forms) but genuine forms in a multidimensional 
space of independent variables. The variational principle involves variations not only with respect 
to the field variables, but also with respect to the geometry in the space of independent 
variables. In this paper we consider a quantum analogue of this new variational principle 
by means of quantum propagators (or equivalently Feynman path integrals). In the case of quadratic 
Lagrangians these can be evaluated in terms of Gaussian integrals. We study also periodic reductions 
of the lattice leading to discrete multi-time dynamical commuting mappings, the simplest example 
of which is the discrete harmonic oscillator, which surprisingly reveals a rich integrable 
structure behind it. On the basis of this study we propose a new quantum variational principle 
in terms of multiform path integrals.

\end{abstract}

\begin{keyword}
quantum mechanics \sep discrete \sep path integral \sep integrable
\end{keyword}

\maketitle


\section{Introduction}


Discrete integrable systems \cite{hietarinta2016discrete} have started to play an increasingly important 
role in deepening the understanding of integrability as a mathematical notion, thereby forging new perspectives 
in both analysis (e.g. the discovery of difference analogues of the Painlev\'e equations), geometry 
(the development of discrete differential geometry, \cite{bobenko2008discrete}) and algebra (e.g. the 
development of cluster algebras through the so-called Laurent phenomenon). In physics, at the quantum 
level, discrete integrable systems appear in connection with random matrix theory and quantum spin 
models of statistical mechanics, and in aspects of relativistic many-body systems \cite{ruijsenaars1986new}, 
but more directly in approaches to establish integrable quantum field theories on the space-time 
lattice \cite{volkov1992quantum}.   

Integrable systems are important not only because they can be treated by exact and rigorous methods, 
but also because they appear to be universal: they have a rare tendency of emerging in a large variety
of contexts and physical situations, such as in correlations functions in scaling limits, random matrices
and in energy level statistics of even chaotic systems. Furthermore, their intricate underlying structures 
gave rise to new mathematical theories, such as quantum groups and cluster algebras, 
revealing novel 
types of combinatorics. Thus, one could argue, letting these systems ``speak for themselves" the 
stories they tell us will lead us to new principles and insights, even perhaps about the structure of Nature 
itself. One such story is about their variational description in terms of a least-action 
principle and its connection to one of the key integrability features, multi-dimensional consistency (MDC). 
The latter is the phenomenon that integrable equations do not come in isolation, but tend to come in 
combination with whole families of equations, all simultaneously imposable on one and the same field 
variable (the dependent variable of the equations). Such equations manifest themselves as higher or 
generalized symmetries, as \textit{hierarchies} of equations or as compatible systems, their very 
compatibility being the signature of the integrability. In fact, it is this very feature that forms a powerful 
tool in the exact solvability of such equations through techniques such as the inverse scattering transform 
(a nonlinear analogue of the Fourier transform), Lax pairs and B\"acklund transformations.  

This story about the variational description of integrable systems started with the paper 
\cite{lobb2009lagrangian}, where the Lagrangian structure of a class of 2D quadrilateral lattice equations was 
studied, which are integrable in the sense of the MDC property. It was shown that for particularly well-chosen 
discrete Lagrangians for those equations, embedded through the MDC property in higher-dimensional space-time 
lattice, the Lagrangians obey a closure property, suggesting that these Lagrangians should be viewed as 
components of a discrete $p$-form that is closed on solutions of the quad equations. This remarkable property 
led to the formulation of a novel least-action principle in which the action is supposed to attain a critical 
point not only w.r.t. variations of the field variables, but also the action being stationary w.r.t. variations 
of the space-time surfaces in the higher-dimensional lattice of independent discrete variables on which the 
equations are defined. This allows one to derive from this extended variational principle not one single 
equation (in the conventional way on a fixed space-time surface) but the full set of compatible equations that 
possess the MDC property. Furthermore, this property was also shown to extend to corresponding integrable 
differential equations defined on smooth surfaces in a multidimensional space-time of independent 
continuous variables, as well as on systems of higher dimension and of higher rank, \cite{lobb2009lagrangianKP,lobb2010lagrangian,xenitidis2011lagrangian} 
as well as to many-body systems \cite{yoo2011discreteCM,yoo2011discreteRS,suris2013variational}. 
Further extensions and deepening understanding of these results were obtained in a number of papers, cf. 
\cite{lobb2013variational,boll2013integrability,boll2013multi}.  

A natural question is whether the Lagrangian multiform structure described above extends also to the quantum 
regime, since, after all, a canonical quantization formalism for reductions of quadrilateral lattice equations 
and higher-rank systems, using non-ultralocal $R$ matrix structures, was already established some while ago 
\cite{nijhoff1993quantization,nijhoff1993fusion}, as well as for a quantum lattice Hirota type system 
\cite{faddeev1999algebraic}, cf. also \cite{bazhanov2008exact}. However, the natural setting for a 
Lagrangian approach in the quantum case is obviously the Feynman path integral \cite{feynman1965quantum}, which has remained curiously unexplored 
in the context of integrable systems theory where there has been a predilection for the Hamiltonian point of view. 
However, when dealing with discrete systems, e.g. systems evolving in discrete time, the Hamiltonian view point is no longer natural, and 
the Lagrangian point of view may become preferable. The further advantage is that in discrete time, path integrals are no longer marred 
by the infinite time-slicing limit which causes such objects to be notoriously ill-defined in general. Thus, first steps to set up a 
path integral approach for integrable quantum mappings\footnote{The notion of \textit{quantum mapping} is essentially due to M.V. Berry et al., 
\cite{berry1979quantum}. }, i.e. integrable systems with discrete-time evolution, were undertaken in 
\cite{field2006time,field2005thesis}. However, the main aim of the present paper is to arrive at an understanding of the Lagrangian multiform 
structure on the quantum level. In order to achieve that, and to avoid analytical complications arising from the nonlinearities, we restrict ourselves 
in this initial treatment to the case of quadratic Lagrangians, associated with linear multidimensionally consistent equations. 
Although this may seem restrictive, the quadratic case is surprisingly rich and exhibits most of the properties of the wider classes of nonlinear 
models when it comes to the MDC aspects. Those reveal themselves in the way the lattice parameters govern the compatible systems of equations, 
and it is there where even these linear equations exhibit quite non-trivial features. In fact, an interesting role reversal between 
discrete independent variables and continuous parameters allows the corresponding quantum propagators to be interpreted at the same time as discrete
as well as continuous path integrals. The periodic reductions are particularly noteworthy, since they lead to propagators that can be readily 
computed, and it is here that the humble quantum harmonic oscillator makes its reappearance in quite a new context.     
 
The outline of the paper is as follows. In section 2 we describe the classical quad equation, i.e. a 2-dimensional partial 
difference equation defined on elementary quadrilaterals, and its Lagrangian 2-form structure. In section 3, we consider its periodic reductions on 
the classical level, and construct commuting flows for the lowest period cases. The simplest (3-step) reduction 
leads to the harmonic oscillator, but even this case there is a non-trivial Lagrangian 1-form structure on the 
classical level. Next, in section 4 we consider the quantization of the reductions through discrete-time step 
path integrals which at the same time provides a natural discretization of the underlying continuous-time model 
in terms of the lattice parameters. The MDC property here is reflected in a path-independence property of the 
propagators. This leads us to suggest a quantum variational principle which we expect may extend to models 
beyond the quadratic case. In section 5 we return to the quad lattice case, which resembles  
a quantum field type of situation, and we establish surface-independence of the relevant propagators, 
suggestive of a quantum variational principle in the field theoretic case.  Finally, in section 6 we discuss 
some possible ramifications of our findings, and how they connect to some ongoing questions 
regarding quantum mechanics and foundational aspects.

\section{Linearised Lattice KdV Equation}
\label{sec:lattice}

Our starting point is a 2 dimensional quadrilateral lattice equation, whose dependent variable $u(n,m)$ is defined on  
lattice points labelled by discrete variables $(n,m)$, which are variables shifting by units, and with lattice parameters 
$p$ and $q$, each  associated with the $n$ and $m$ directions on the lattice respectively. We adopt the shift notation 
by accents $\ot{\phantom{a}}$ and $\oh{\phantom{a}}$, i.e. for $u:=u(n,m)$,we have 
$\ot u := u(n+1,m)$, $\oh u:= u(n,m+1)$. 
The equation of interest in this paper is in the linear quadrilateral equation:
\begin{equation}
   (p+q) (\ot u - \oh u) = (p-q) (u - \oh{\ot u})  \ .
\label{eq:linear}
\end{equation}
This quadrilateral equation is supposed to hold on every elementary
plaquette across a 2 dimensional
lattice; the elementary plaquette is shown in figure \ref{fig:plaquette}.
This is something of a ``universal'' linear quad equation, being the
natural linearisation of nearly all the integrable quad equations of the
ABS list \cite{adler2003classification}.
This equation can be derived
 via discrete Euler-Lagrange equations
on the three-point Lagrangian
\begin{equation}
   \mathcal L (u, \ot u, \oh u) = u (\ot u - \oh u) 
                           - \frac12 \frac{p+q}{p-q} (\ot u - \oh u)^2 
\ ; \quad
   \oh{ \ot{ \left(\frac{\partial \mcL}{\partial u} \right)}} + 
   \oh{ \left(\frac{\partial \mcL}{\partial \widetilde{u}} \right)} + 
   \ot{ \left(\frac{\partial \mcL}{\partial \widehat{u}} \right)} = 0 ,
\label{eq:linearlagrangian}
\end{equation}
where, for the action, we sum across every plaquette in the lattice:
\begin{equation}
    \mathcal S = \sum_{(n,n) \in \mathbb Z^2} 
    \mathcal L(u_{n,m} , u_{n+1,m} , u_{n,m+1}) \ .
\end{equation}
Note  that the Lagrangian (\ref{eq:linearlagrangian})
is also the natural linearisation of the Lagrangians
for the non-linear quad equations of the ABS list from which
(\ref{eq:linear}) can be derived.

\begin{figure}[hbtp]
  \begin{center}
    \begin{tikzpicture}  [scale=2]
      \begin{scope} [very thin]
        \draw [->]  (0,0.5) -- (0,-1.5) node [black][anchor=east] {$m$} ;
        \draw [->]  (-0.5,0) -- (1.5,0) node [black][anchor=south] {$n$} ;
        \draw  (-0.5,-1) -- (1.5,-1) ;
        \draw   (1, 0.5) -- (1,-1.5) ;
      \end{scope}
        \filldraw [fill=black!5!white , draw =black, thick] 
              (0,0) --  (1,0) -- (1,-1) -- (0,-1)  -- cycle  ;
        \filldraw [fill=white, thick] circle (0.04) 
              node[anchor=south east] {$u$} ;
        \filldraw [fill=white, thick] (0,-1) circle (0.04) 
              node[anchor=north east] {$\what u$} ;    
        \filldraw [fill=white, thick] (1,0) circle (0.04)
              node[anchor = south west] {$\wtilde u$} ;
        \filldraw [fill=white, thick] (1,-1) circle (0.04) 
              node[anchor = north west] {$ \what{\wtilde u}$} ;
        \draw (0.5,-0.5) node {$\mathcal L (u, \wtilde u, \what u)$}  ;
        \draw (0.5,0) node [anchor=south] {$p$} ;
        \draw (0,-0.5) node [anchor=east] {$q$} ;   
    \end{tikzpicture}
\caption{An elementary plaquette in the lattice}
\label{fig:plaquette}
\end{center} 
\end{figure}
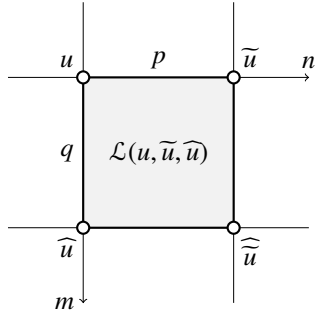

In fact, the standard variational principle on (\ref{eq:linearlagrangian}) produces
two copies of (\ref{eq:linear}). In order to regain precisely the linearised KdV 
equation, we must make use of the multiform variational principle introduced by Lobb
and Nijhoff \cite{lobb2009lagrangian,lobb2013variational}. (\ref{eq:linear})
can be consistently embedded into a \emph{multidimensional} lattice,
with directions labelled by subscripts $i,j,k$. Across an elementary plaquette
in the $i-j$ plane, (\ref{eq:linear}) takes the form:
\begin{equation}
   (p_i + p_j) (u_i - u_j) = (p_i - p_j) (u - u_{ij})  \ ,
\label{eq:generallinear}
\end{equation}
where $u_i$ indicated $u$ shifted once in the $i$ direction on the lattice, 
and $p_i$ is now the lattice parameter associated to the $i$ direction.
This equation has multidimensional consistency, which can be checked by establishing
closure around the cube \cite{nijhoff2001discrete} - field variables at any
point in the multi-dimensional lattice can be calculated via any route
in a consistent manner.

In the variational principle proposed in \cite{lobb2013variational}, the action
is defined as the sum of Lagrangians on elementary plaquettes
 across a 2-dimensional surface $\sigma$, embedded in the multidimensional space.
To derive the equations of motion, we then demand the action be stationary
not only under the variation of the field variables $u$, but also under the 
variation of the surface $\sigma$ itself. For this to hold, we require
closure of the Lagrangian:  if we consider the combination of oriented Lagrangians on the faces of a cube,
we require that \emph{on the equations of motion}, the Lagrangians sum to zero.
In other words,
\begin{equation}
     \Delta_1  \mathcal L_{23} (u) + \Delta_2 \mathcal L_{31} (u) 
    + \Delta_3 \mathcal L_{12} (u) 
     =  \mathcal L_{23} (u_1) - \mathcal L_{23} (u) 
    + \mathcal L_{31} (u_2) - \mathcal L_{31} (u) 
    + \mathcal L_{12} (u_3) - \mathcal L_{12} (u)
    = 0    \ ,
\label{eq:closure}
\end{equation}
where we have used the shorthand 
$\mcL_{ij} (u) := \mcL (u, u_i, u_j; p_i, p_j)$, and
the final equality in (\ref{eq:closure}) holds only when we apply (\ref{eq:generallinear}).
According to \cite{lobb2013variational}, such a system must be described
by a Lagrangian of the form
$
    \mathcal L (u, u_i , u_j ; p_i,p_j) = A (u, u_i;p_i) - A(u,u_j;p_j)
                                                    + C(u_i, u_j; p_i,p_j) \ ;
$
where we require $C_{ij}$ to be antisymmetric under interchange of $i$ and $j$.
Notice that the Lagrangian (\ref{eq:linearlagrangian}) is already in this form.
By using the  multidimensional consistency, a set of Euler-Lagrange equations
are derived, which simplify on a single plaquette to:
\begin{equation}
    \frac {\partial}{\partial u_i} \Big(
    A(u, u_i; p_i) - A(u_i, u_{ij}; p_j) + C(u_i,u_j;p_i,p_j)
    \Big)
    = 0 \ .
\label{eq:latticeEL}
\end{equation}
This yields precisely the equation (\ref{eq:linear}).
This structure allows us to describe the mutliple consistent equations (\ref{eq:generallinear})
in a single Lagrangian framework - that of the 2-form. This is then the appropriate
variational structure to describe multi-dimensionally consistent systems \cite{lobb2009lagrangian}.


In fact, the Lagrangian (\ref{eq:linearlagrangian}) is the almost unique quadratic
Lagrangian with a 2-form structure (i.e. exhibiting the closure property).
Considering the general form for a three-point Lagrangian 2-form
and equation of motion (\ref{eq:latticeEL}), 
we restrict our attention to quadratic Lagrangians and have the 
general form:
\begin{equation}
    \mcL_{ij} (u, u_i, u_j) = \left( \tfrac12 a_i u^2 + c_i u u_i \right)
    - \left( \tfrac12 a_j u^2 + c_j u u_j \right)
    + \left( \tfrac12 b_{ij} u_i^2 - \tfrac12 b_{ji} u_j^2 + \delta_{ij} u_i u_j
    \right)  \ ,
\label{eq:genclasslagrange}
\end{equation}
where we require $\delta_{ji} = -\delta_{ij}$.
Here, subscripts on coefficients indicate dependence on the lattice parameters
$p_i$ and $p_j$.
This Lagrangian yields the equation of motion:
$
    c_i u - c_j u_{ij} = (a_j - b_{ij})u_i - \delta_{ij} u_j
$.
This is a quad equation, and as such we require it to be symmetric under the 
interchange of $i$ and $j$. This leads to the conditions
$
    c_i = c_j = c \ , \ \textrm{constant,}
\ 
    a_j - b_{ij} = \delta_{ij} \ .
$

Noting that the Lagrangian (\ref{eq:genclasslagrange}) already obeys the closure
relation (\ref{eq:closure}) on the equations of motion above,
we use our freedom to multiply by an overall constant to let $c=1$, and hence
the general Lagrangian is given by:
\begin{equation}
    \mcL_{ij} (u, u_i, u_j) = 
    u (u_i - u_j) - \tfrac12 \delta_{ij} (u_i - u_j)^2
    + \tfrac12 a_i (u^2 - u_j^2) - \tfrac12 a_j (u^2 - u_i^2) \ .
\label{eq:linearlagrangian2}
\end{equation}
We can see this has the same form as (\ref{eq:linearlagrangian}), but with
a more general dynamical, anti-symmteric parameter $\delta_{ij}$, 
and the free parameter
$a_i$ that does not effect the equations of motion.


\section{One Dimensional Reduction: The Discrete Harmonic Oscillator}

\subsection{Periodic Reduction}
\label{sec:periodicreduction}

Reductions of lattice equations to integrable symplectic mappings have been considered since the early 1990s \cite{papageorgiou1990integrable,bruschi1991integrable,capel1991complete,quispel1991integrable}. Here, we are considering a 
linearised version of the lattice KdV equation as our starting point, and follow the same reduction procedure
as has been considered previously for non-linear quad equations. 
The reduction is obtained by imposing a periodic initial value problem, where the evolution of the data progresses through the lattice
according to a dynamical map, or equivalently a system of ordinary difference equations, which is constructed by implementing the lattice 
equation (\ref{eq:linear}).
We begin with initial data $u_0$, $u_1$ and $u_2$, and let $\oh u_2 = u_0$, according
to figure \ref{fig:reduction}. This unit is then repeated periodically
across an infinite staircase in the lattice.
This is the simplest meaningful reduction we can perform on the lattice
equation.

\begin{figure}[hbtp]
\begin{center}
    \begin{tikzpicture}  [scale=2]
      \draw [thick] (0,0.2) -- (0,0) -- (2,0) -- (2,-1) -- (2.2,-1) ;
      \draw [thick, dashed]  (0,0) -- (0,-1) -- (2,-1)  (1,0) -- (1,-1) ;
              
      \filldraw [fill=black] circle (0.04) 
          node[anchor=south west] {$u_0$} ;
      \filldraw [fill=black] (1,0) circle (0.04)
          node[anchor = south west] {$u_1$} ;
      \filldraw [fill=black] (2,0) circle (0.04) 
          node[anchor = south west] {$ u_2$} ;
      \filldraw [fill=black] (2,-1) circle (0.04) 
          node[anchor = south west] {$ \oh u_2 = u_0$} ; 
                    
      \filldraw [fill=white, thick] (0,-1) circle (0.04) 
          node[anchor=south west] {$\oh u_0$} ;    
      \filldraw [fill=white, thick] (1,-1) circle (0.04) 
          node[anchor=south west] {$\oh u_1$} ;   

      \draw (0.5,0) node [anchor=south] {$p$} ;
      \draw (0,-0.5) node [anchor=east] {$q$} ;   
    \end{tikzpicture}
\caption{Periodic inivital value problem on the lattice equation}
\label{fig:reduction}
\end{center}
\end{figure}
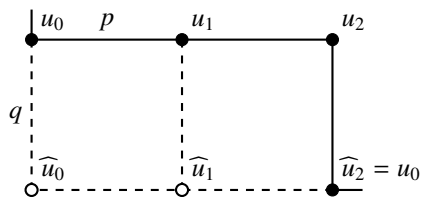

Applying the linear lattice equation (\ref{eq:linear}) to each
plaquette, we can write equations for the dynamical mapping
$ (u_0,u_1,u_2) \to (\oh u_0, \oh u_1, \oh u_2) $:
\begin{equation}
  \oh{u}_0 = u_1+s(\oh{u}_1-\oh{u}_2)
\  , \quad
  \oh{u}_1 = u_2+s(u_0-u_1)
\  , \quad  
  \oh{u}_2 = u_0
\ ; \quad
  s:=\frac{p-q}{p+q} \ . 
\label{eq:hatumap}
\end{equation}
This is a finite-dimensional discrete system. We introduce the
reduced variables
$
    x := u_1 - u_0 $ , $  y := u_2 - u_1
$
and, by eliminating $y$, write the second order 
difference equation:
\begin{equation}
  \oh{x} + 2bx + \uh x = 0\  , \quad b:=1+2s-s^2\ ,
\label{eq:p=1.5xb}
\end{equation}
where the underhat $\uh x$ indicates a backwards step.
This equation can be expressed by a Lagrangian-type generating function,
with the equation arising from discrete Euler-Lagrange equations:
\begin{equation}
    \mathcal L (x, \oh x) = x \oh x + b x^2
\ , \quad 
    \oh{\frac{\partial \mcL}{\partial x}} + \frac{\partial \mcL}{\partial \oh{x}} = 0 \ ,
\label{eq:simplelagrangian}
\end{equation}
and so is symplectic,
$
    \ud \oh x \wedge \ud \oh y = \ud x \wedge \ud y  .
$
The map also possesses an exact invariant:
\begin{equation}
    I_b (x ,\oh x) = x^2 + \oh x^2 + 2b x \oh x \ .
\label{eq:invariantb}
\end{equation}

The equation (\ref{eq:p=1.5xb}) is a discrete harmonic oscillator. 
It is not difficult to see that the most general solution to (\ref{eq:p=1.5xb}) 
is given by
\begin{equation}
    x_m = c_1 \sin (\mu m) + c_2 \cos (\mu m) 
\ ; \quad
    \cos \mu = -b \ , 
\label{eq:hosolution}
\end{equation}
where $m$ is the discrete variable. This has a clear relation to the solution
for the continuous time harmonic oscillator.
This solution can alternatively be written as
$
    x_m = A \lambda^m + B \lambda^{-m}
$, $
    \lambda = -b + \sqrt{b^2-1}
$.
By considering derivatives with respect to the \emph{parameter} $b$, we
can then derive the equations:
\begin{equation}
    \frac{d x}{d b}
    = \frac m{1-b^2} (b x + \oh x)
\ , \ 
    \frac{d x}{d b}
    = - \frac m{1-b^2} (b x + \uh x) \ , 
\label{eq:semidisc} 
\end{equation}
Eliminating $\oh x$ yields the second order differential equation in $b$:
\begin{equation}
    (1-b^2) \frac{d^2 x}{d b^2}
    - b \frac{d x}{d b} + m^2 x = 0
\ .
\label{eq:bcontflow}
\end{equation}
 A remarkable exchange has taken place: the parameter
and independent variable of the discrete case, $b$ and $m$, have exchanged
roles to become the independent variable and parameter of a continuous time
model.
Note that (\ref{eq:bcontflow}) can be simplified by taking $\mu:= \cos^{-1}(-b)$ as the
 ``time'' variable, so that:
$
    d^2 x/ d\mu^2 + m^2 x = 0
$.
This is the equation for the harmonic oscillator, with a quantised
frequency $\omega = m$.


\subsection{Commuting Discrete Flow}
\label{sec:commdiscflow}

Recall that the linear lattice equation (\ref{eq:generallinear})
can be embedded in a multidimensional lattice. 
From the
periodic reduction in the plane (figure \ref{fig:reduction})
we consider the embedding within a three dimensional lattice.
The third lattice direction has lattice parameter $r$, and we
introduce shifted variables $\ob u_i$, as shown in figure \ref{fig:commuting}.

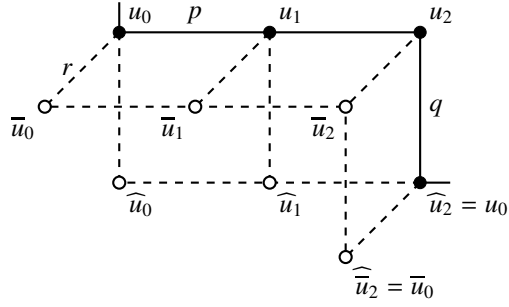
\begin{figure}[hbtp]
\begin{center}
   \begin{tikzpicture}  [scale=2]
    \draw [thick] (0,0.2) -- (0,0) -- (2,0) -- (2,-1) -- (2.2,-1) ;
    \draw [thick, dashed]  (0,0) -- (0,-1) -- (2,-1)  (1,0) -- (1,-1) ;
    \filldraw [fill=black] circle (0.04) 
          node[anchor=south west] {$u_0$} ;
    \filldraw [fill=black] (1,0) circle (0.04)
          node[anchor = south west] {$u_1$} ;
    \filldraw [fill=black] (2,0) circle (0.04) 
          node[anchor = south west] {$ u_2$} ;
    \filldraw [fill=black] (2,-1) circle (0.04) 
          node[anchor = north west] {$ \oh u_2 = u_0$} ; 
    \filldraw [fill=white, thick] (0,-1) circle (0.04) 
          node[anchor=north west] {$\oh u_0$} ;    
    \filldraw [fill=white, thick] (1,-1) circle (0.04) 
          node[anchor=north west] {$\oh u_1$} ;   
    \draw (0.5,0) node [anchor=south] {$p$} ;
    \draw (2,-0.5) node [anchor=west] {$q$} ;   
     \draw [thick, dashed] (0,0) -- (225:0.7) -- ++(2,0) -- (2,0) 
                             (1,0) -- +(225:0.7) 
                           (2,-1) -- ++(225:0.7) -- ++(0,1);
    \filldraw [fill=white, thick] (225:0.7) circle (0.04) 
          node[anchor=north east] {$\ob u_0$} ;    
    \filldraw [fill=white, thick] (1,0) +(225:0.7) circle (0.04) 
          node[anchor=north east] {$\ob u_1$} ;   
    \filldraw [fill=white, thick] (2,0) +(225:0.7) circle (0.04) 
          node[anchor=north east] {$\ob u_2$} ; 
    \filldraw [fill=white, thick] (2,-1) +(225:0.7) circle (0.04) 
          node[anchor=north west] {$\oh {\ob u}_2 = \ob u_0$} ; 
    \draw (225:0.35) node [anchor=east] {$r$} ;
  \end{tikzpicture} 
\caption{The variables $\ob u_i$ extend from the plane in a third
direction.}
\label{fig:commuting}
\end{center}
\end{figure}

To derive the mapping, we now use the lattice equations (\ref{eq:generallinear}):
\[ 
    (q+r)(\oh{u}-\ob{u})=(q-r)(u-\oh{\ob{u}})\  , \quad  
    (r+p)(\ob{u}-\ot{u})=(r-p)(u-\ot{\ob{u}})\  , 
\]  
which, in terms of the $u_i$, yield 
\begin{equation}
    \begin{array}{ccc}
        \ob u_0 &=& u_1+t(\ob u_1-u_0) \ , 
    \\ 
        \ob{u}_1 &=& u_2+t(\ob{u}_2-u_1) \ , 
    \\ 
        \ob{u}_2 &=& u_0+t'(\ob{u}_0-u_2) \ . 
    \end{array}
\quad \textrm{where \ }
    \begin{array}{c}
        t := \frac{p-r}{p+r} \  , 
    \\ 
        t' := \frac{q-r}{q+r} \ .
    \end{array}
\end{equation}
Again, we use reduction variables $(x,y)$, which yield
the map $(x,y) \to (\ob x, \ob y)$.
This map can be written in a matrix form,
from which it can be shown to be area preserving, 
$\ud \ob x \wedge \ud \ob y = \ud x \wedge \ud y$.
Eliminating $y$ again produces a second order 
difference equation in $x$:
\begin{equation}
    \ob{x}+2ax+\ub{x}=0\  , 
    \quad {\rm with}\quad 
    2a:=\frac{(2t+1-t^2)-t'(2t-1+t^2)}{1-t^2t'}\  .  
\label{eq:p=1.5xa}
\end{equation}
This equation has the same form as (\ref{eq:p=1.5xb}), that of a discrete harmonic 
oscillator, along with invariant
$
    I_a (x ,\ob x) = x^2 + \ob x^2 + 2a x \ob x
$.

We can write both maps $(x,y) \to (\oh x, \oh y)$ and $(x,y) \to (\ob x, \ob y)$
in matrix form:
$
    \oh{\mathbf x} = \mathbf {S \,x } 
$ , $
    \ob{\mathbf x} = \mathbf{T \, x} 
$ , $
    \mathbf x := ( x , y )^T
$.
It is then clear that the two maps commute, 
$
    (\oh {\ob x} , \oh {\ob y}) = (\ob {\oh x} , \ob {\oh y}) \ ,
$
since we have
$
    [ \mathbf S , \mathbf T ] = 0 \ .
$
This last relation relies on the parameter identity,
$
    s t t' = s - t + t' 
$,
which is easily shown using the definitions for $s$, $t$ and $t'$.

Our equations are slightly simplified by introducing the parameters
$P := p^2 + pq \  $, $Q := q^2 \ $ and $R := r^2 \ $,  in terms of which
$
     a = (P - R)/(P + R)
$ , $
     b = (P - Q)/(P + Q)
$.
By returning to earlier evolution equations in terms of $x$ and $y$ and eliminating $y$
in a different manner, we derive ``corner equations'' for the evolution, 
linking $x$, $\oh x$ and
$\ob x$; or $\oh x$, $\ob x$ and $\oh {\ob x}$ respectively. Thus:
\begin{equation}
    \left( \frac{P-Q}q - \frac{P-R}r \right) x 
                 = \frac{P+R}r \ob x - \frac{P+Q}q \oh x  
\ , \quad
    \left( \frac{P-Q}q - \frac{P-R}r \right) \oh {\ob x} 
                 = \frac{P+R}r \oh x - \frac{P+Q}q \ob x \ .
\label{eq:p=1.5corner}
\end{equation}
Thus we have multiple equations of motion (\ref{eq:p=1.5xb}),
(\ref{eq:p=1.5xa}), (\ref{eq:p=1.5corner}) all holding simultaneously
on the same variable $x$.

\subsection{Lagrangian 1-form structure}
\label{sec:1form}

A recent development in understanding discrete integrable systems 
with commuting flows has been the Lagrangian multiform theory
 \cite{lobb2013variational,lobb2009lagrangian, 
yoo2011discreteCM, yoo2011discreteRS, nijhoff2013lagrangian,  
suris2013variational,boll2013multi}. A system with two or more 
commuting, discrete flows can be
described by a Lagrangian 1-form structure,
which provides a way to obtain a simultaneous system of equations
for a single dependent variable from a variational principle.
Thus, the Lagrangians generating the 
flows $x \to \oh x$ and $x \to \ob x$ should
form the components of a \emph{difference 1-form}, each associated with an 
oriented direction on a 2D lattice.

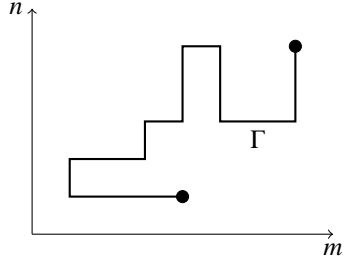
\begin{figure}[htp]
\begin{center}
\begin{tikzpicture}  [scale=1]
    \draw [->] (0.5,0.5) -- (4.5,0.5)   node [anchor=north] {$m$}   ;
    \draw [->] (0.5,0.5) -- (0.5,3.5)   node [anchor=east] {$n$} ;
    \draw [thick] (2.5,1) -- (1,1) -- (1,1.5) -- (2,1.5)
             -- (2,2) -- (2.5,2) -- (2.5,3) -- (3,3)
             -- (3,2) -- (4,2) -- (4,3) ;
    \draw (3.5,2) node[anchor=north] {$\Gamma$} ;
    \filldraw [fill=black] (2.5,1) circle (0.08)  ;
    \filldraw [fill=black] (4,3) circle (0.08)  ;
\end{tikzpicture}
\caption{A curve $\Gamma$ in the discrete variables.}
\label{fig:gamma}
\end{center}
\end{figure}

The action functional is then defined as a sum of elementary Lagrangian elements over 
an arbitrary discrete curve $\Gamma$ in the 2D lattice, as shown in figure 
\ref{fig:gamma}.
\begin{equation}
    \mcS [x(\mathbf n) ; \Gamma] = 
    \sum\limits_{\mathbf {\gamma}(\mathbf n) \in \Gamma }
    \mathcal{L}_i (x(\mathbf n) , x(\mathbf n +\mathbf e_i )) \ .
\label{eq:SGamma}
\end{equation}
The usual variational principle demands that, on the equations of motion,
the action $\mcS$ be stationary under the variation of the dynamical
variables $x$. In addition, we also demand that $\mcS$ be stationary
under variations of the curve $\Gamma$ itself.
 This principle leads to the compatability
of equations of motion and corner equations, under the condition of \emph{closure} of 
the Lagrangians. That is, on the equations of the motion, the action should be locally
invariant under changes to the curve $\Gamma$ and therefore:
\begin{equation}
     \Box \mathcal L := \mcL_a (\oh x, \oh {\ob x})
                             - \mcL_a (x, \ob x)
                             - \mcL_b (\ob x, \oh {\ob x})
                             + \mcL_b (x, \oh x)
                             = 0 \ ,
\label{eq:1formclosure}
\end{equation}
where this last equality holds only on the equations of motion.

In the model we are considering, we already have compatible flows with consistent
corner equations, and so it is natural for us to seek a Lagrangian form exhibiting closure.
However, if we naively seek to satisfy the closure relation (\ref{eq:1formclosure})
with any simple Lagrangian yielding the equations of motion,
we will find that this does not suffice - we must seek a more specific form.
By considering the general form for the quadratic Lagrangians:
\begin{equation}
     \mathcal L_a = 
      \alpha \big (x \ob x + (a - a_0) x^2 + a_0 \ob x^2 \big) 
\ , \quad
     \mathcal L_b =  
     \beta \big( x \oh x + (b - b_0) x^2 + b_0 \oh x^2 \big) \ ,
\label{eq:generaloscillatorlagrangian}
\end{equation}
we can apply the closure $\Box \mathcal L = 0$ as a condition.
Recall that we require closure only on the solutions to the equations of motion,
so we apply the corner equations
(\ref{eq:p=1.5corner}) to $\Box \mcL$, and then compare coefficients
of the remaining terms. 
Demanding that $\alpha, a_0$ and $\beta, b_0$
be independent of $Q$ and $R$ respectively, we find the conditions on the coefficients:
\begin{equation}
     \alpha = \frac{P+R}r \gamma 
\ , \quad
     \beta  = \frac{P+Q}q \gamma 
\ , \quad
     a_0  = \frac r{P+R} f(P) + \frac12 a 
\ ,  \quad
     b_0  = \frac q{P+Q} f(P) + \frac12 b \ .
\label{eq:closureconditions}
\end{equation}
where $\gamma$ is some overall constant, and $f(P)$ is a free function of $P$.
$f$ does not make any contribution to what follows, and so we ignore it:
we let $a_0 = a/2$ and $b_0 = b/2$.

This yields the Lagrangians:
\begin{equation}
     \mathcal L_a (x, \ob x) = \frac1r \bigg( (P+R) \, x \ob x 
                                    + \frac12 (P-R) \, (x^2 + \ob x^2) \bigg) 
\ , \ \ 
     \mathcal L_b (x, \oh x) = \frac1q \bigg( (P+Q) \, x \oh x 
                                    + \frac12 (P-Q) \, (x^2 + \oh x^2) \bigg) .
\label{eq:p=1.5lagrangians}
\end{equation}
By construction, these obey the condition $\Box \mathcal L = 0$
on the equations of motion, and also yield the
equations of motion (\ref{eq:p=1.5xb}) and (\ref{eq:p=1.5xa}) by the usual
variational principle. This eliminates a great deal of the usual freedom in
choosing our Lagrangian: the closure condition mandates a specific form
of the Lagrangian.

In fact, not only the equations (\ref{eq:p=1.5xb}) and (\ref{eq:p=1.5xa})
arise from a variational principle on this action, but also the 
corner equations (\ref{eq:p=1.5corner}).
We have four elementary curves in the space of two discrete variables,
shown in figure \ref{fig:elementarycurves}\footnote{Such elementary curves defining a complete set of discrete EL equations 
were first considered in \cite{yoo2011calogero}.}. Across each curve, we can define an
action, and then a variation with respect to the middle point, which leads to
an equation of motion.

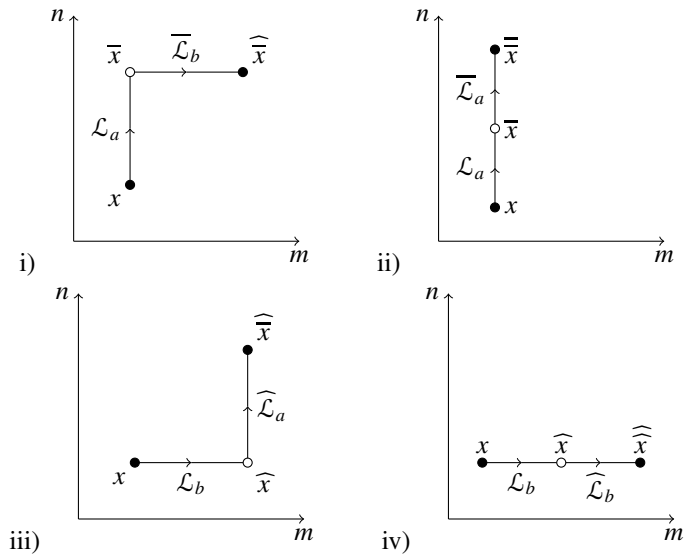
\begin{figure}[htp]
\begin{center}
i)
\begin{tikzpicture}  [scale=1.5]
    \draw [->] (0,0) -- (2,0)   node [anchor=north] {$m$}   ;
    \draw [->] (0,0) -- (0,2)   node [anchor=east] {$n$} ;

    \draw [->] (0.5,0.5) -- (0.5,1) ;
    \draw [->] (0.5,1) -- (0.5,1.5) -- (1,1.5) ;
    \draw (1,1.5) -- (1.5,1.5) ;

    \filldraw [fill=black] (0.5,0.5) circle (0.04) 
              node[anchor=north east] {$x$};
    \filldraw [fill=black] (1.5,1.5) circle (0.04) 
              node[anchor=south west] {$\oh {\ob x}$} ;
    \filldraw [fill=white] (0.5,1.5) circle (0.04)
              node[anchor=south east] {$\ob x$} ;
    
    \draw (0.5,1) node[anchor=east]
           {$\mcL_a$} ;
    \draw (1,1.5) node[anchor=south]
           {$\ob {\mcL}_b$} ;
\end{tikzpicture}
\qquad 
%
ii)
\begin{tikzpicture}  [scale=1.5]
    \draw [->] (0,0) -- (2,0)   node [anchor=north] {$m$}   ;
    \draw [->] (0,0) -- (0,2)   node [anchor=east] {$n$} ;

    \draw [->] (0.5,0.3) -- (0.5,0.65) ;
    \draw [->] (0.5,0.65) -- (0.5,1.35) ;
    \draw (0.5,1.35) -- (0.5,1.7) ;

    \filldraw [fill=black] (0.5,0.3) circle (0.04) 
              node[anchor=west] {$x$};
    \filldraw [fill=white] (0.5,1) circle (0.04) 
              node[anchor=west] {$\ob x$} ;
    \filldraw [fill=black] (0.5,1.7) circle (0.04) 
              node[anchor=west] {$\ob {\ob x}$} ;
    
    \draw (0.5,0.65) node[anchor=east]
           {$\mcL_a$} ;
    \draw (0.5,1.35) node[anchor=east]
           {$\ob {\mcL}_a$} ;
\end{tikzpicture}
\qquad 
\\
iii)
\begin{tikzpicture}  [scale=1.5]
    \draw [->] (0,0) -- (2,0)   node [anchor=north] {$m$}   ;
    \draw [->] (0,0) -- (0,2)   node [anchor=east] {$n$} ;

    \draw [->] (0.5,0.5) -- (1,0.5) ;
    \draw [->] (1,0.5) -- (1.5,0.5) -- (1.5,1) ;
    \draw (1.5,1) -- (1.5,1.5) ;
    
    \filldraw [fill=black] (0.5,0.5) circle (0.04) 
              node[anchor=north east] {$x$};
    \filldraw [fill=white] (1.5,0.5) circle (0.04) 
              node[anchor=north west] {$\oh x$} ;
    \filldraw [fill=black] (1.5,1.5) circle (0.04) 
              node[anchor=south west] {$\oh {\ob x}$} ;
    
    \draw (1,0.5) node[anchor=north]
           {$\mcL_b $} ;
    \draw (1.5,1) node[anchor=west]
           {$\oh {\mcL}_a $} ;
\end{tikzpicture}
\qquad 
%
iv)
\begin{tikzpicture}  [scale=1.5]
    \draw [->] (0,0) -- (2,0)   node [anchor=north] {$m$}   ;
    \draw [->] (0,0) -- (0,2)   node [anchor=east] {$n$} ;

    \draw [->] (0.3,0.5) -- (0.65,0.5) ;
    \draw [->] (0.65,0.5) -- (1.35,0.5) ;
    \draw (1.35,0.5) -- (1.7,0.5) ;

    \filldraw [fill=black] (0.3,0.5) circle (0.04) 
              node[anchor=south] {$x$};
    \filldraw [fill=white] (1,0.5) circle (0.04) 
              node[anchor=south] {$\oh x$} ;
    \filldraw [fill=black] (1.7,0.5) circle (0.04) 
              node[anchor=south] {$\oh {\oh x}$} ;
    
    \draw (0.65,0.5) node[anchor=north]
           {$\mcL_b$} ;
    \draw (1.35,0.5) node[anchor=north]
           {$\oh {\mcL}_b$} ;
\end{tikzpicture}
\end{center}
\caption{Simple discrete curves for variables $m$ and $n$.}
\label{fig:elementarycurves}
\end{figure}

The action and Euler Lagrance equation for curve \ref{fig:elementarycurves}(i) are
\begin{equation}
    S=\mathcal{L}_a(x,\ol{x})+\mathcal{L}_b(\ol{x},\wh{\ol{x}})
\ , \quad
    \frac{\partial S}{\partial\ol{x}}=
    2\left[\left(\tfrac{P-R}{r}+\tfrac{P-Q}{q}\right)\ol{x}+\tfrac{P+R}{r}x+\tfrac{P+Q}    
    {q}\wh{\ol{x}}\right]=0\  ,
\end{equation}
which is compatible with equations (\ref{eq:p=1.5corner}).
Similarly, for curve \ref{fig:elementarycurves}(ii):
\begin{equation}
    S=\mathcal{L}_a(x,\ol{x})+\mathcal{L}_a(\ol{x},\ol{\ol{x}})
\  ,  \quad
    \frac{\partial S}{\partial\ol{x}}=2\left[ 2\tfrac{P-R}{r}\ol{x}+
    \tfrac{P+R}{r}\left(x+\ol{\ol{x}}\right)\right]=0\  ,  
\end{equation}
which is equation (\ref{eq:p=1.5xa}) (i.e. this is a ``standard" Euler-Lagrange equation).
Curves \ref{fig:elementarycurves}(iii) and (iv) yield similarly (\ref{eq:p=1.5xb}) and the 
other part of  (\ref{eq:p=1.5corner}).
We therefore have, \emph{for the specific choice of Lagrangians described}, a
consistent 1-form structure, yielding the equations of motion and corner equations,
and obeying a Lagrangian closure relation.
The discrete harmonic oscillator then, despite its simplicity, nonetheless
has an underlying structure of a Lagrangian one-form expressing
commuting flows: this is the simplest example yet discovered of
such a structure.

Recall the invariants,
it is straightforward to show using the equations of motion that both invariants 
are preserved under both evolutions,
$
     \oh I_b = \ob I_b = I_b $ , $\  \ob I_a = \oh I_a = I_a 
$.
It is not clear, however, that these invariants are necessarily equal: $I_b$ has an
apparent dependence on $Q$, and $I_a$ on $R$, that must be resolved.
Taking our special choice of Lagrangians (\ref{eq:p=1.5lagrangians}),
we can then define canonical momenta, and rewrite our invariants in those terms.
Writing $X_a$ as the mometum conjugate to $x$ in $\mathcal L_a$, and $X_b$ 
similarly for $\mathcal L_b$, we find:
\begin{equation}
     X_a = -\frac {\partial \mathcal L_a}{\partial x} \ = \ 
                  - \frac{P+R}r \, \ob x \ - \frac{P-R}r \, x 
\ , \quad
     X_b = -\frac {\partial \mathcal L_b}{\partial x} \ = \ 
                  - \frac{P+Q}q \, \oh x \ - \frac{P-Q}q \, x \ .
\label{eq:momenta}
\end{equation}
As a direct consequence of the corner equation (\ref{eq:p=1.5corner}) we 
then have precisely that
$
     X_a = X_b =: X \ .
$
In other words, we can define a common conjugate momentum for both evolutions.
If we then write our invariants in terms of $x$ and $X$ we find after multiplication
by a constant (which clearly does not change the nature of the invariants) that
\begin{equation}
    I_a = I_b = \frac12 X^2 + 2P x^2 \ .
\label{eq:p=1.5invariant}
\end{equation}
Note that in this form $I_a, I_b$ appear $Q$ and $R$ independent, and are nothing other than
 the Hamiltonian for the continuous harmonic oscillator, with angular frequency 
 $\omega = 2 \sqrt P$.
This form is Lagrangian dependent.
A different choice of Lagrangian yields different conjugate
momenta that are no longer equal, and where the equivalence of the
invariants is no longer apparent. Requiring equality of the invariants
turns out to be an equivalent condition to demanding Lagrangian closure.

The compatibility of the two discrete evolutions and their corner equations (guaranteed 
by the Lagrangian 1-form structure) allows us to consider a joint solution
to the equations $x_{m,n}$. We allow $m$ to label the hat evolution, and $n$ to label
the bar evolution, such that $x=x_{m,n}$, $\oh x = x_{m+1,n}$, 
$\ob x = x_{m,n+1}$, and so on.
Requiring $x_{m,n}$ to obey (\ref{eq:p=1.5xb}), (\ref{eq:p=1.5xa})
and (\ref{eq:p=1.5corner}), we have the joint solution for the
evolutions:
\begin{equation}
    x_{m,n} = c_1 \sin (\mu m + \nu n) + c_2 \cos (\mu m + \nu n) 
\ ; \quad
    b = - \cos \mu
\ , \
    a = -\cos \nu \ .
\label{eq:jointsoln}
\end{equation}
In the same way as the parameter $b$ generates a continuous flow compatible
with the discrete evolution (\ref{eq:bcontflow}), so we can find a continuous
flow in the parameter $a$:
\begin{equation}
    (1-a^2) \frac{d^2 x}{d a^2}
    - a \frac{d x}{d a} + n^2 x = 0 \ .
\label{eq:contflows}
\end{equation}
Now the joint solution (\ref{eq:jointsoln}) guarantees the compatibility of the
$a$ and $b$ flows with the commuting discrete evolutions.
The compatibility of the continuous flows can be further verified by checking the relation
$
 \tfrac d{da} \tfrac{dx}{db} = \tfrac d{db} \tfrac{dx}{da}
$ 
using (\ref{eq:semidisc}) and similar equations for $a$.
The continuous time-flows are generated by 
the usual Euler-Lagrange equations on continuous time Lagrangians of the form
\begin{equation}
    \mcL_b (x, x_b) = \frac1{2m} \sqrt{1-b^2} \  \left(\frac{\partial x}{\partial b}\right)^2
    - \frac{m}{2\sqrt{1-b^2}} x^2 
\ ; \quad
    \mcL_a (x, x_a) = \frac1{2n} \sqrt{1-a^2} \  \left(\frac{\partial x}{\partial a}\right)^2
    - \frac{n}{2\sqrt{1-a^2}} x^2  .
\end{equation}
Using the corner equations (\ref{eq:p=1.5corner})
these Lagrangians exhibit \emph{continuous}
multiform compatibility, obeying the relations
\begin{equation}
    \frac{\partial \mcL_a}{\partial x_a}
    = \frac{\partial \mcL_b}{\partial x_b}
\ , \quad
    \frac{\partial}{\partial a} 
    \left( \frac{\partial \mcL_b}{\partial x} \right)
    = \frac{\partial}{\partial b} 
    \left( \frac{\partial \mcL_a}{\partial x} \right)
\ .
\end{equation}
So, by considering the discrete parameters $a,b$ now as
\emph{continuous variables}, we find a continuous-time
1-form structure.

As in \cite{degasperis2001newton}, the harmonic oscillator continues
to display surprising new features. On the discrete level,
we discover compatible flows that can be expressed through
the structure of a Lagrangian form, even for this very simple
case. 
This deeper structure then extends
beyond the discrete case also into compatible continuous flows
and we have an interplay between these discrete and continuous
one-form structures.
Having endowed the harmonic oscillator with these multi-dimensional
structures, how are they revealed in the quantum harmonic oscillator case?


\subsection{Higher periodicity}
\label{sec:higherperiod}

The periodic reduction defined in section \ref{sec:periodicreduction} is part of
a more general family of periodic staircase initial value problems
\cite{papageorgiou1990integrable,capel1991complete,nijhoff1992lattice}.
In general, we define $2P$ initial conditions, $u_0, u_1, \ldots, u_{2P-1}$
such that $u_0 = \oh u_{2P-1}$,
along a staricase as shown in figure \ref{fig:staircase}.
The linearised KdV equation (\ref{eq:linear}) defines a dynamical map 
$(u_0,u_1,\ldots,u_{2P-1}) \to (\oh u_0,\oh u_1,\ldots,\oh u_{2P-1})$.
As before, we introduce reduced variables $x_1, \ldots, x_{P-1}$, 
$y_1, \ldots, y_{P-1}$ and can eliminate the $y_i$ to give 
a $P-1$ dimensional system of second order
difference equations in terms of the $x_i$ variables.

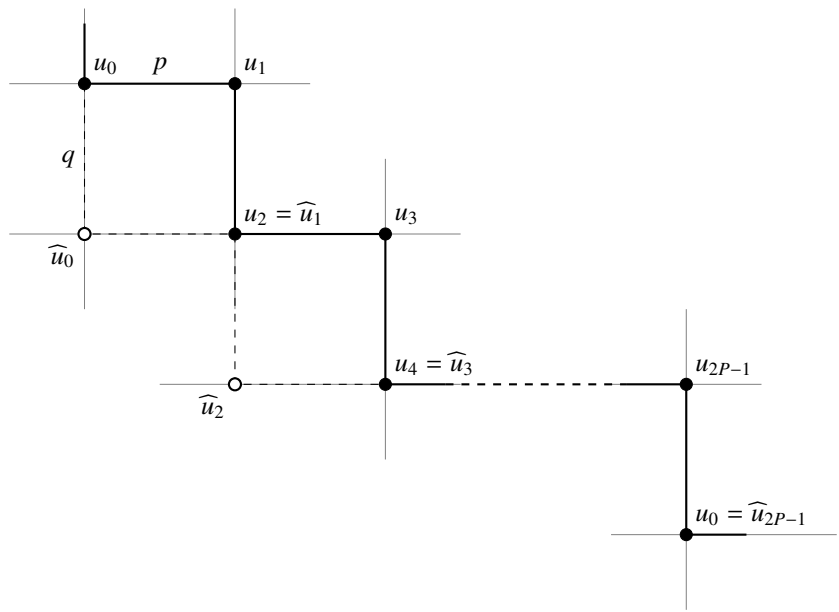
\begin{figure}[hbtp]
\begin{center}
    \begin{tikzpicture}  [scale=2]
      \begin{scope} [very thin,gray]
        \draw  (-0.5,0) -- (1.5,0) ;
        \draw  (-0.5,-1) -- (2.5,-1) ;
        \draw  (0.5,-2) -- (2.5,-2)  (3.5,-2) -- (4.5,-2) ;
        \draw (3.5,-3) -- (5,-3) ;
        \draw  (0,0.5) -- (0,-1.5) ;
        \draw  (1,0.5) -- (1,-1.5) ;
        \draw   (2, -0.5) -- (2,-2.5) ;
        \draw  (4,-1.5) -- (4,-3.5) ;
      \end{scope}
        \draw [thick] (0,0.4) -- (0,0) -- (1,0) -- (1,-1) -- (2,-1) -- (2,-2) -- (2.4,-2)   
                          (3.6,-2) -- (4,-2) -- (4,-3) -- (4.4,-3)  ;
        \draw [dashed] (0,0) -- (0,-1) -- (1,-1) -- (1,-2) -- (2,-2) ;
        \draw [thick, dashed]  (2.4,-2) -- (3.6,-2) ;      
        \filldraw [fill=black] circle (0.04) 
              node[anchor=south west] {$u_0$} ;
        \filldraw [fill=black] (1,0) circle (0.04)
              node[anchor = south west] {$u_1$} ;
        \filldraw [fill=black] (1,-1) circle (0.04) 
              node[anchor = south west] {$ u_2 = \oh u_1$} ;
        \filldraw [fill=black] (2,-1) circle (0.04) 
              node[anchor = south west] {$ u_3$} ; 
        \filldraw [fill=black] (2,-2) circle (0.04) 
              node[anchor = south west] {$ u_4 = \oh u_3$} ;                         
        \filldraw [fill=black] (4,-2) circle (0.04) 
              node[anchor = south west] {$ u_{2P-1}$} ;
        \filldraw [fill=black] (4,-3) circle (0.04) 
              node[anchor=south west] {$u_0 = \oh u_{2P-1}$} ;
        \filldraw [fill=white, thick] (0,-1) circle (0.04) 
              node[anchor=north east] {$\oh u_0$} ;    
        \filldraw [fill=white, thick] (1,-2) circle (0.04) 
              node[anchor=north east] {$\oh u_2$} ;   
        \draw (0.5,0) node [anchor=south] {$p$} ;
        \draw (0,-0.5) node [anchor=east] {$q$} ;   
    \end{tikzpicture}
\caption{The periodic staircase for period $P$.}
\label{fig:staircase}
\end{center}
\end{figure}

The $P=2$ case yields a 1 dimensional mapping that is entirely
equivalent to the case we have considered in section \ref{sec:periodicreduction},
except the lattice parameters combine in a slightly different way to give
the coefficient of the harmonic oscillator.

The $P=3$ case is the next case of interest, as here we find a system of
coupled harmonic oscillators in $x_1$ and $x_2$, with two commuting 
invariants and a similar commuting flow structure. In a similar manner
to (\ref{eq:p=1.5xb}) we can derive equations for a discrete flow in
variables $x_1$ and $x_2$:
\begin{equation}
    \oh x_1 + \oh x_2 + \uh {x_1} + s (2 x_1 + x_2) = 0
\ , \quad
    \oh x_2 + \uh {x_1} + \uh {x_2} + s (x_1 + 2 x_2) = 0
\ .
\label{eq:p3eqn1}
\end{equation}
As in section \ref{sec:commdiscflow}, we can also derive a commuting
flow for the evolution:
\begin{eqnarray}
    (1+tt') (\ob x_1 + \ub x_1) + \ob x_2 + tt' \ub x_2
    + (t+t') (2x_1 + x_2) &=& 0   \ ,
\label{eq:p3eqn2}
\\
    (1+tt') (\ob x_2 + \ub x_2) + tt' \ob x_1 + \ub x_1
    + (t+t') (x_1 + 2 x_2) &=& 0   \ .
\label{eq:p3eqn3}
\end{eqnarray}
Commutativity of these evolutions can be easily shown from
the first order form (with $x$ and $y$ variables) by writing each
evolution in matrix form; the resulting matrices commute.
The evolution then also possesses corner equations, which can be
derived using the eliminated $y$ variables. 
These allow us
to write closed form Lagrangians, such that $\Box \mcL = 0$ (\ref{eq:1formclosure})
on the equations of motion (\ref{eq:p3eqn1},\ref{eq:p3eqn2},\ref{eq:p3eqn3}):
\begin{eqnarray}
    \mcL_1 (x, \oh x) &=& 
    x_1 (\oh x_1 + \oh x_2) + x_2 \oh x_2
    + \tfrac12 s (x_1^2 + x_1 x_2 + x_2^2
    + \oh x_1^2 + \oh x_1 \oh x_2 + \oh x_2^2) \ ,
\label{eq:p3lag1}
\\
    \mcL_2 (x, \ob x) &=& 
    \frac{1 + tt'}{1 - tt'} (x_1 \ob x_1 + + x_2 \ob x_2)
    + \frac1{1-tt'} (x_1 \ob x_2 + tt' x_2 \ob x_1) 
\notag \\  && \quad 
    + \frac12 \frac{t + t'}{1 - tt'} (x_1^2 + x_1 x_2 + x_2^2
    + \ob x_1^2 + \ob x_1 \ob x_2 + \ob x_2^2) \ ,
\label{eq:p3lag2}
\end{eqnarray}
recalling the relation of $s,t,t'$.
A Lagrangian 1-form structure as in section \ref{sec:1form} follows.
Note that $\mcL_2$ represents a B\"acklund transform with parameter $r$.

The Lagrangians (\ref{eq:p3lag1},\ref{eq:p3lag2}) allow us to define
the momenta conjugate to $x_1, x_2$ writing 
$X_i = - \partial \mcL_1 / \partial x_i$,
\begin{equation}
    X_1 = -\big( \oh x_1 + \oh x_2 + \tfrac12 s (2x_1 + x_2) \big)
\ , \quad
    X_2 = -\big( \oh x_2 + \tfrac12 s (x_1 + 2 x_2) \big) \ .
\label{eq:p3momenta}
\end{equation}
with respect to which we have the invariant Poisson structure
$\{ x_i, X_j \} = \delta_{ij}$, preserved under the mappings. 
We could also write expressions for $X_i$ using $\mcL_2$, with
equality of these expressions producing the corner equations.

We can additionally derive two quadratic invariants of the mapping $I_1, I_2$,
which are invariant under both maps.
The canonical structure of (\ref{eq:p3momenta}) allows us to show
the critical integrability property that the two invariants
are in involution with each
other, with respect to the canonical Poisson bracket:
$
\{ I_1,I_2 \} =0
$ where
\begin{equation}
    I_1 = x_1 X_1 - 2 x_1 X_2 + 2 x_2 X_1 - x_2 X_2 
\ , \quad
    I_2 = \left( 1 - \tfrac34 s^2 \right)
    (x_1^2 + x_1 x_2 + x_2^2)
    + X_1^2 - X_1 X_2 + X_2^2  \ .
\end{equation}
The invariance and involutivity of these can be shown by direct calculation.
$I_1$ and $I_2$ will thus generate two commuting continuous flows
to the mapping.

For both the hat and the bar evolutions (\ref{eq:p3eqn1}),
(\ref{eq:p3eqn2}), (\ref{eq:p3eqn3}) it is possible to write
explicit solutions, and indeed we can find a joint solution
to the discrete evolutions:
\begin{equation}
    x_2 (m,n) = a \cos (\mu_+ m + \nu_+ n) + b \sin (\mu_+ m + \nu_+ n)
    + c \cos (\mu_- m + \nu_- n) + b \sin (\mu_- m + \nu_- n) \ ,
\label{eq:p3solution}
\end{equation}
where
$
  \cos \mu_{\pm} = -3s/4 \pm \tfrac12 \sqrt{1-3s^2/4}
$
and
\begin{equation}
  \cos \nu_{\pm} = -\frac{3(t+t')(1+tt')}{4(1+tt'+t^2t'^2)}
  \pm \frac12 \left( \frac{1-tt'}{1+tt'+t^2t'^2}\right )^2 
  \sqrt{1+tt'+t^2t'^2- \tfrac34 (t+t')^2} \ .
\end{equation}
We find $x_1 (m,n)$ similarly as a linear combination of shifts of $x_2$.
By considering derivatives with respect to the parameters $s$ and $t$
(recalling $t'$ is not independent of $s,t$), we can therefore derive
commuting continuous flows from the solution structure (\ref{eq:p3solution}).
We observe then again the interchange between continuous and discrete
parameters and variables, as in the lower periodic case.
We expect this will lead to a continuous Lagrangian 1-form structure,
but defer further investigation to a later paper.



\section{The Quantum Reduction}

In section \ref{sec:1form}, the discrete harmonic oscillator
model, arising as a special reduction from the linearised lattice KdV equation \eqref{eq:linear}, 
albeit a simple linear model nonetheless displays commuting discrete flows.
In the classical case, the Lagrangian 1-form structure captures these
commuting flows in a variational principle. A natural question is: what is
the quantum analogue for such a structure? 
Since the harmonic oscillator is well known and understood, it forms a good first toy
model for investigating Lagrangian form structures at the quantum level.

Integrable quantum mappings, arising from the  quantisation of mapping reductions from
lattice equations, were constructed and studied within the framework of canonical 
quantization and (non-ultralocal) R-matrix structures in \cite{nijhoff1993quantization,berry1979quantum,
quispel1992integrable,nijhoff1992integrable,field2005exact}. 
In a pioneering paper \cite{dirac1933lagrangian} Dirac took the position that the Lagrangian
approach to Physics is the more \emph{natural} one and proposed the first steps towards 
incorporating the Lagrangian into quantum mechanics, a route that was later pursued by Feynman 
leading to his concept of the path integral \cite{feynman1948space}. Concurring 
with Dirac's point of view, we seek here to understand the extended Lagrangian multiform  
variational principle on the quantum level, leading naturally to problem of finding a path integral 
version of that formalism in order to capture its natural quantum analogue. To make first steps in 
that direction the simple case of the quantum mappings derived in the previous section is a good 
starting point, exploiting the well-known formal techniques of path integrals, cf. e.g.  
\cite{feynman1965quantum,grosche1998handbook,schulman2005techniques}. As we will point out later
there are some similarities with ideas developed by Rovelli in \cite{rovelli2011discretizing,rovelli2011structure} 
who also uses the harmonic oscillator to develop ideas on reparametrisation invariant discretisations within 
the path integral framework, in particular the natural emergence of conservation of
the energy of the coninuous model within a time-slicing discretisation.

\subsection{Feynman Propagators}
\label{sec:unitary}

Beginning from our Lagrangian $\mcL_b$ (\ref{eq:p=1.5lagrangians})
we write the conjugate momenta $X:=X_b$ (\ref{eq:momenta}) and
$
       \oh X = \partial \mathcal L_b / \partial \oh x 
$.
In canonical quantisation, position $x$ and momentum $X$ become
\emph{operators} $\ox$ and $\oX$, such that
$ [\ox,\oX]=i\hbar $.
The momentum equations
(\ref{eq:momenta}) become operator equations of motion:
\begin{equation}
    \oh \ox - \ox =  
    \frac q{P-Q} \oh \oX - \frac{2P}{P-Q} \ox 
\ , \quad
    \oh \oX - \oX =
    - \frac{4Pq}{P-Q} \ox + \frac{2P}{P-Q} \oh \oX \ .
\label{eq:p=1.5opeqn}
\end{equation}
To understand the discrete time 
evolution we wish to express the evolution 
$(\ox,\oX) \to (\oh \ox, \oh \oX)$,
 in terms of a time-evolution operator $U_b$,
such that
$
    \ox \to \oh \ox = U_b^{-1} \ox U_b
$, $
    \oX \to \oh \oX = U_b^{-1} \oX U_b
$.
This is a canonical approach to discrete quantisation,
see for example \cite{nijhoff1993quantization}. 
Considering (\ref{eq:p=1.5opeqn}), it is not hard to see that an
appropriate choice of $U_b$ is given by:
\begin{equation}
    U_b = e^{i V(\ox)/2\hbar} e^{i T(\oX)/\hbar} e^{i V(\ox)/2\hbar} 
\ 
    = \exp \left(\frac{iP \ox^2}{\hbar q} \right)
       \exp \left(\frac{iq \oX^2}{2\hbar (P+Q)} \right)
       \exp \left(\frac{iP \ox^2}{\hbar q} \right) \ .
\label{eq:p=1.5U}
\end{equation}
In other words, a separated form for $U_b$ exists, but it is required
to have three terms. Note that (\ref{eq:p=1.5U}) is not a unique
form for $U_b$.

In discrete time, the one time-step propagator is then given by
$
    K_b (x, n; \oh x, n+1) = \,_{n+1} \langle \oh x | x \rangle_n
                   = \langle \oh x | U_b | x \rangle
$,
where we have moved in the second equality from time-dependent,
Heisenberg picture eigenstates to time-independent,
Schr\"odinger picture eigenstates. 
Since we have an
explicit form for $U_b$, we can calculate this expression by inserting
a complete set of momentum eigenstates:
\begin{eqnarray}
    \langle \oh x | U_b | x \rangle &=&
    \int \ud X
    e^{i V(\oh x)/2\hbar}
        \langle \oh x |  X \rangle
    e^{i T(X)/\hbar}
        \langle X | x \rangle
        e^{i V(x)/2\hbar}  \ ,
\notag \\
    &=& \left( \frac{i(P+Q)}{2 \pi \hbar q} \right)^{1/2}
        \exp \bigg\{ \frac i{\hbar q} \Big(
          (P+Q) x \oh x 
       + \tfrac12 (P-Q) (x^2 + \oh x^2)
                   \Big)  \bigg\} \ ,
\notag  \\
    &=& 
    \left( \frac{i(P+Q)}{2 \pi \hbar q} \right)^{1/2}
    \exp \left[ \frac i{\hbar} \mathcal L_b (x, \oh x) \right] \ .
\label{eq:onestepprop}
\end{eqnarray}
The second line results from a Gaussian integral:
the linearity of our system justifies taking the integration
region over the whole real line (we make some assumptions
here on the Hilbert space).
The final line recalls the Lagrangian (\ref{eq:p=1.5lagrangians}).
This is what might be expected for a ``one-step" path integral
(such as in \cite{field2005quantisation,field2006time}) noting that this approach
also specifies the normalisation constant.

This is sufficient to define the discrete-time path integral. By iterating
(\ref{eq:onestepprop}) over $N$ steps, we can write precisely the 
propagator for our discrete system:
\begin{equation}
    K_b (x_0, 0; x_N, N) =
    \left( \frac{i(P+Q)}{2 \pi \hbar q} \right)^{N/2}
    \prod_{n=1}^{N-1} \int_{-\infty}^{\infty} \ud x_n \
    e^{i \mathcal S[x(n)]/\hbar}  
\ , \quad
    \mathcal S [x(n)] = \sum_{n=0}^{N-1} \mathcal L_b (x_n , x_{n+1}) \ .
\label{eq:discretepathintegral}
\end{equation}
In this discrete case, equation (\ref{eq:discretepathintegral}) gives a 
precise definition to the path integral notation:
\begin{equation}
    K_b (x_0, 0; x_N, N) = \int_{x(0)=x_0}^{x(N)=x_N} [\mcD x(n)] 
                               \ e^{i \mathcal S[x(n)]/\hbar}  \ .
\end{equation}
Notice in particular that the normalisation associated to the measure is here
unambiguous.  In our quadratic regime, we can now calculate this explicitly.
Details are given in \ref{app:discretepropagator}, but we first expand our quantum
variables around the classical path, where the classical action can be evaluated as
$
    \mathcal S_{cl} = \sqrt P  
    [ 2 x_0 x_N - (x_0^2 + x_N^2) \cos \mu N ]
    / \sin \mu N \ .
$
Evaluating the discrete path integral as a series of $N$ Gaussian integrations,
and recalling the normalisation constant in (\ref{eq:discretepathintegral}),
we calculate the propagator:
\begin{equation}
    K_b (x_0, 0; x_N, N) =   
    \left( \frac{i \sqrt P}{\pi \hbar \sin (\mu N)} \right)^{1/2}
    \exp \left\{
    \frac{i \sqrt P}{\hbar \sin (\mu N)}
    \left(2x_0 x_N - (x_0^2 + x_N^2) \cos (\mu N) \right) \right\} \ .
\label{eq:discretepropagator}
\end{equation}
Note that this has the same form as the propagator for the continuous
time harmonic oscillator.
Dependence on the parameter $b$ is evident through 
$\cos \mu = -b$.
We note, then, that the propagator is common to both the discrete flow
and to the interpolating continuous time flow.

Using the operator equations of motion (\ref{eq:p=1.5opeqn}),
 it is easy to see that we have an operator invariant:
\begin{equation}
    \oI_b = \frac12 \oX^2 + 2P \ox^2 
\ 
    =
    \frac12 \left(- \hbar^2 \frac{\partial^2}{\partial x^2} + 4P x^2 \right) \ ,
\label{eq:qinvariant}
\end{equation}
This is, of course, simply the operator version of the classical invariant
(\ref{eq:p=1.5invariant}), and is precisely the Hamiltonian for the
continuous time harmonic oscillator, where $4P = \omega^2$.
Note that $\oI_b$ is $Q$ independent, and so it is clear that the 
same process applied to the bar evolution generated by $\mathcal L_a$ will give the same result.
In other words, both discrete quantum evolutions share the same invariant, which 
is the harmonic oscillator.
The invariant can also be considered from the perspective of path
integrals and the unitary operator following the method of \cite{field2006time};
this is elaborated in \ref{app:quantuminvariants}.
We can relate $\oI_b$ (\ref{eq:qinvariant}) to the evolution operator
$U_b$ (\ref{eq:p=1.5U}) in principle by a Campbell-Baker-Hausdorff
expansion (\cite{oteo1991baker,varadarajan2013lie});
an explicit form is given by algebraic manipulation:
\begin{equation}
    U_b = \exp \left[\frac1{\hbar \sqrt{P}}
    \arctanh \left( \frac{i \sqrt{P}}q
    \right) \ \oI_b \right] \ .
\end{equation}
So we can see clearly how the discrete quantum evolution
relates to a continuous time flow.




\subsection{Path independence of the propagator}
\label{sec:pathindepprop}

In equation (\ref{eq:discretepropagator}) we have established the 
propagator for an evolution in one discrete time variable; but
we have in the classical case two compatible discrete flows
(\ref{eq:p=1.5lagrangians}).
The one-step propagator in the hat direction is given in (\ref{eq:onestepprop}),
whilst in the bar direction it is easily deduced by the same method:
\begin{equation}
    K_a (x, \ob x; 1)  = 
    \left( \frac{i(P+R)}{2 \pi \hbar r} \right)^{1/2}
    \exp \left[ \frac i{\hbar} \mathcal L_a (x, \ob x) \right] \ .
\label{eq:onesteppropa}
\end{equation}
We remark that, as we have here a second time
direction, we might plausibly introduce a second $\hbar$ parameter. We ignore
such considerations for the time being and allow $\hbar$ to be the same in both
time directions.
In general, if we begin at a time co-ordinate $(0,0)$ and evolve
along integer time co-ordinates to a new
time $(N,M)$, the propagator could depend not only on the endpoints,
but also on the path $\Gamma$ taken through the time variables, see figure
\ref{fig:gamma}.
We associate to the path an action 
$\mcS_{\Gamma}:=\mcS [x(\bn);\Gamma]$ (\ref{eq:SGamma}). We can then define
a propagator for the evolution along the time-path $\Gamma$, made up of 
the one-step elements (\ref{eq:onestepprop}), (\ref{eq:onesteppropa}):
\begin{equation}
    K_{\Gamma} \big( x_b, (N,M); x_a,(0,0) \big)
    :=
    \mcN_{\Gamma} \prod_{(n,m) \in \Gamma} \int \ud x_{n,m} \
    \exp \left[ \frac i{\hbar} \mcS_{\Gamma} [x(\bn)] \right] \ ,
\label{eq:gammapropagator}
\end{equation}
where we have integrated over all internal points $x_{n,m}$ on the curve
$\Gamma$. Here $\mcN_{\Gamma}$ represents the product of normalisation factors from
the relevant elements of (\ref{eq:onestepprop}), (\ref{eq:onesteppropa}).

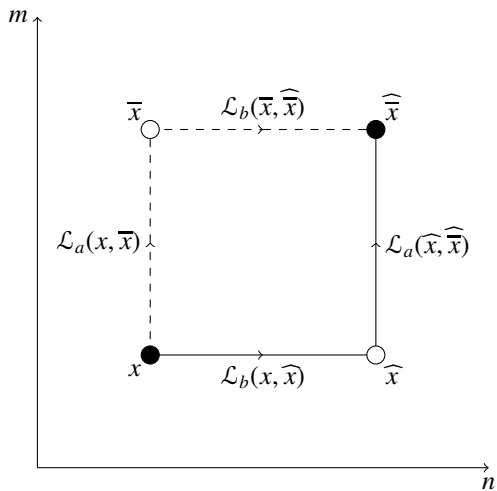
\begin{figure}[hbtp]
\begin{center}
\begin{tikzpicture}  [scale=3]

    \draw [->] (0,0) -- (2,0)   node [anchor=north] {$n$}   ;
    \draw [->] (0,0) -- (0,2)   node [anchor=east] {$m$} ;

    \draw [->] (0.5,0.5) -- (1,0.5) ;
    \draw [->] (1,0.5) -- (1.5,0.5) -- (1.5,1) ;
    \draw (1.5,1) -- (1.5,1.5) ;
    
    \draw [dashed,->] (0.5,0.5) -- (0.5,1) ;
    \draw [dashed,->] (0.5,1) -- (0.5,1.5) -- (1,1.5) ;
    \draw [dashed] (1,1.5) -- (1.5,1.5) ;

    \filldraw [fill=black] (0.5,0.5) circle (0.04) 
              node[anchor=north east] {$x$};
    \filldraw [fill=white] (1.5,0.5) circle (0.04) 
              node[anchor=north west] {$\oh x$} ;
    \filldraw [fill=black] (1.5,1.5) circle (0.04) 
              node[anchor=south west] {$\oh {\ob x}$} ;
    \filldraw [fill=white] (0.5,1.5) circle (0.04)
              node[anchor=south east] {$\ob x$} ;
    
    \draw (1,0.5) node[anchor=north]
           {$\mcL_b (x, \oh x)$} ;
    \draw (1.5,1) node[anchor=west]
           {$\mcL_a (\oh x, \oh {\ob x})$} ;
    \draw (0.5,1) node[anchor=east]
           {$\mcL_a (x, \ob x)$} ;
    \draw (1,1.5) node[anchor=south]
           {$\mcL_b (\ob x, \oh{\ob x})$} ;

\end{tikzpicture}
\caption{The solid line shows path (i) for $K_{\lrcorner}$,
and the dashed line path (ii) for $K_{\ulcorner}$. The white 
circles represent variables that are integrated over.}
\label{fig:Kcorners}
\end{center}
\end{figure}

We begin by considering the simple case of an evolution of one step in each direction.
There are two routes to achieve this, as shown in figure \ref{fig:Kcorners}. Either we
evolve first in the hat direction, followed by an evolution in the bar direction, or vice versa.
In path (i), we evolve first according to the hat evolution $\mathcal L_b$, and then according to
the bar evolution $\mathcal L_a$. We evaluate the propagator as:
\begin{equation}
    K_{\lrcorner} (x, \oh {\ob x}) = 
    \left( \frac{(P+Q)(P+R)}{(-2 \pi i \hbar)^2 qr} \right)^{1/2}
    \int^{\infty}_{-\infty} \ud \oh x \
    \exp \left \{ \frac i{\hbar} \left( \mathcal L_b (x, \oh x) 
                + \mathcal L_a (\oh x, \oh {\ob x}) \right) \right \} \ .
\label{eq:Klrcorner}
\end{equation}
For the alternative path (ii) we evolve first by the bar evolution $\mathcal L_a$,
and then the hat evolution $\mathcal L_b$:
\begin{equation}
    K_{\ulcorner} (x, \oh {\ob x}) = 
    \left( \frac{(P+Q)(P+R)}{(-2 \pi i \hbar)^2 qr} \right)^{1/2}
    \int^{\infty}_{-\infty} \ud \ob x \
    \exp \left \{ \frac i{\hbar} \left( \mathcal L_a (x, \ob x) 
                + \mathcal L_b (\ob x, \oh {\ob x}) \right) \right \} \ .
\label{eq:Kulcorner}
\end{equation}
These are both resolved by substituting Lagrangians (\ref{eq:p=1.5lagrangians})
and evaluating the Gaussian integral. 
The result is totally symmetric under interchange of the parameters 
$q$ and $r$, as are (\ref{eq:Klrcorner}) and (\ref{eq:Kulcorner});
so that
\begin{equation}
   K_{\ulcorner} (x, \oh{\ob x}) = K_{\lrcorner} (x, \oh{\ob x}) \ .
\label{eq:Kul=Kur}
\end{equation}
We find the same propagator for either path.
It is an obvious corollary of this result that, so long as we take only forward steps in 
time, the propagator $K_{N,M}(x_a, x_b)$ is independent of the path taken in the time
variables.

We could also consider a path in the time variables allowing backward 
time steps. As in the classical case, we can construct an action
for such a trajectory, using an appropriate orientation for the Lagrangians.
In the quantum
case we perform a path integral over this action, integrating over all 
intermediate points. As $U_b$ generates a time-step in the $b$ direction 
(section \ref{sec:unitary}), $U_b^{-1}$ generates the backward 
evolution.

\begin{figure}[htp]
\begin{center}
\begin{tikzpicture}  [scale=3]
    \draw [->] (0,0) -- (2,0)   node [anchor=north] {$\oh{\phantom u}$}   ;
    \draw [->] (0,0) -- (0,2)   node [anchor=east] {$\ob{\phantom u}$} ;

  \begin{scope} [dashed]
    \draw [->] (0.5,0.5) -- (1,0.5) ;
    \draw (1,0.5) -- (1.5,0.5) ;
  \end{scope}
    
    \draw [->] (0.5,0.5) -- (0.5,1) ;
    \draw [->] (0.5,1) -- (0.5,1.5) -- (1,1.5) ;
    \draw [->] (1,1.5) -- (1.5,1.5) -- (1.5,1) ;
    \draw (1.5,1) -- (1.5,0.5) ;

    \filldraw [fill=black] (0.5,0.5) circle (0.04) 
              node[anchor=north east] {$x$};
    \filldraw [fill=black] (1.5,0.5) circle (0.04) 
              node[anchor=north west] {$\oh x$} ;
    \filldraw [fill=white] (1.5,1.5) circle (0.04) 
              node[anchor=south west] {$\oh {\ob x}$} ;
    \filldraw [fill=white] (0.5,1.5) circle (0.04)
              node[anchor=south east] {$\ob x$} ;
    
    \draw (1,0.5) node[anchor=north]
           {$\mcL_b (x, \oh x)$} ;
    \draw (1.5,1) node[anchor=west]
           {$- \mcL_a (\oh x, \oh {\ob x})$} ;
    \draw (0.5,1) node[anchor=east]
           {$\mcL_a (x, \ob x)$} ;
    \draw (1,1.5) node[anchor=south]
           {$\mcL_b (\ob x, \oh{\ob x})$} ;
\end{tikzpicture}
\caption{The path for action $\mcS_{\sqcap}$. In the propagator,
we integrate over the variables at the white circles.
Note the minus sign on the backward step, $-\mcL_a (\oh x, \oh {\ob x})$.}
\label{fig:backwardsstep}
\end{center}
\end{figure}

Considering once more the simplest case, we imagine a trajectory around three
sides of a square, shown in figure \ref{fig:backwardsstep}.
Including the normalisation factors from (\ref{eq:onestepprop})
this is described by the propagator,
\begin{equation}
    K_{\sqcap} (x, \oh x) = 
    \frac{(P+Q)^{1/2}(P+R)}{(2 \pi \hbar)^{3/2} (-iq)^{1/2} r}
    \int_{-\infty}^{\infty} \ud \ob x 
    \int_{-\infty}^{\infty} \ud \oh{\ob x} \ 
    \exp
    \left( \tfrac{i}{\hbar}
    \mathcal L_a (x, \ob x) 
    + \mathcal L_b (\ob x, \oh{\ob x}) - \mathcal L_a (\oh x, \oh{\ob x})
      \right) \ .
\end{equation}
This is easily calculated by Gaussian integrals, and yields:
\begin{equation}
    K_{\sqcap} (x, \oh x) = 
    \left( \frac{i(P+Q)}{2 \pi \hbar q} \right)^{1/2}
    \exp \left( \frac i{\hbar} \mathcal L_b (x, \oh x)  \right) 
    =
    K_b (x, \oh x; 1)    \ .
\label{eq:Ksq=Kb}
\end{equation}
So we regain exactly our one step propagator from (\ref{eq:onestepprop}). 
Remarkably, we again achieve Lagrangian closure, but now on the quantum level.
Recall that classically Lagrangian closure depended upon the equations of
motion: here we have left the equations of motion behind, and yet this
key result still holds.

We could also consider the possibility of a \emph{loop} in the
discrete variables, illustrated in figure \ref{fig:timeloop}(i). 
We imagine some unspecified incoming and outgoing actions 
$\mcS_{in} (x_a, x_1)$ and $\mcS_{out} (x_5, x_b)$,
a simple loop in discrete steps, and five integration variables
$x_1, \ldots, x_5$. Note that we assign two integration variables
to the same vertex, as it is visited twice by the path: the following
calculation will justify this choice as the correct one.

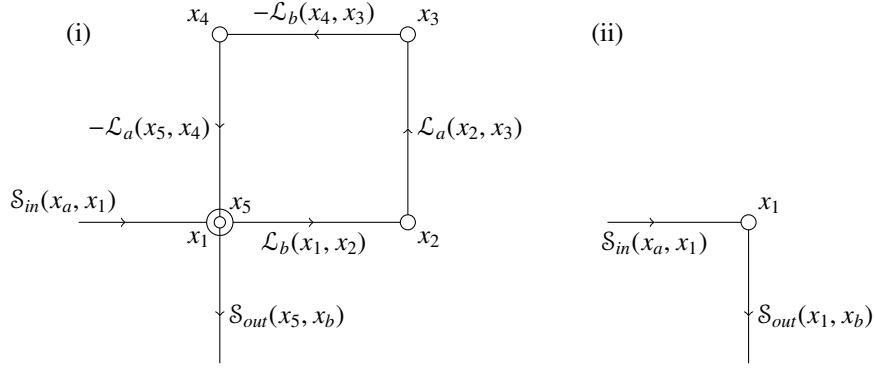
\begin{figure}[htbp]
\begin{center}
\begin{tikzpicture}  [scale=2.5]
    \draw [->] (-0.25,0.5) -- (0,0.5) ;
    \draw [->] (0,0.5) -- (1,0.5) ;
    \filldraw [fill=white] (0.5,0.5) circle (0.07)  
              node[anchor=north east] {$x_1$} ;
    \draw [->] (1,0.5) -- (1.5,0.5) -- (1.5,1);
    \draw [->] (1.5,1) -- (1.5,1.5) -- (1,1.5);
    \draw [->] (1,1.5) -- (0.5,1.5) -- (0.5,1) ;
    \draw [->] (0.5,1) -- (0.5,0) ;
    \draw (0.5,0) -- (0.5,-0.25) ;
    \filldraw [fill=white] (0.5,0.5) circle (0.03) 
              node[anchor=south west] {$x_5$};
    \filldraw [fill=white] (1.5,0.5) circle (0.04) 
              node[anchor=north west] {$x_2$} ;
    \filldraw [fill=white] (1.5,1.5) circle (0.04) 
              node[anchor=south west] {$x_3$} ;
    \filldraw [fill=white] (0.5,1.5) circle (0.04)
              node[anchor=south east] {$x_4$} ;
    \draw (1,0.5) node[anchor=north]
           {$\mcL_b (x_1, x_2)$} ;
    \draw (1.5,1) node[anchor=west]
           {$\mcL_a (x_2, x_3)$} ;
    \draw (0.5,1) node[anchor=east]
           {$-\mcL_a (x_5, x_4)$} ;
    \draw (1,1.5) node[anchor=south]
           {$-\mcL_b (x_4, x_3)$} ;
    \draw (0,0.5) node[anchor=south east]
           {$\mcS_{in} (x_a, x_1)$}  ;
    \draw (0.5,0) node[anchor=west]
           {$\mcS_{out} (x_5, x_b)$}  ;
    \draw (-0.25,1.5) node {(i)}  ;

\begin{scope}[xshift=80pt]
    \draw [->] (-0.25,0.5) -- (0,0.5) ;
    \draw [->] (0,0.5) -- (0.5,0.5) -- (0.5,0) ;
    \draw (0.5,0) -- (0.5,-0.25) ;
    \filldraw [fill=white] (0.5,0.5) circle (0.04) 
              node[anchor=south west] {$x_1$};
    \draw (0,0.5) node[anchor=north]
           {$\mcS_{in} (x_a, x_1)$}  ;
    \draw (0.5,0) node[anchor=west]
           {$\mcS_{out} (x_1, x_b)$}  ;
    \draw (-0.25,1.5) node {(ii)}  ;
\end{scope}
\end{tikzpicture}
\caption{(i) shows the loop in discrete variables. (ii) is what remains after
collapse of the loop.}
\label{fig:timeloop}
\end{center}
\end{figure}

We then consider the action for the loop,
$
    \mcS_{loop} = 
    \mcL_b (x_1,x_2) + \mcL_a (x_2,x_3)
    - \mcL_b (x_4,x_3) - \mcL_a (x_5,x_4)
$,
noting the orientations on the Lagrangians.
With normalising factors from (\ref{eq:onestepprop}) and complex
conjugations in the backward steps, we then have:
\begin{equation}
    K_{loop} (x_a, x_b) = \frac{P+Q}{2 \pi \hbar q}
    \frac{P+R}{2 \pi \hbar r} 
    \int \ud x_1 \ldots \int \ud x_5 \ 
    \exp \frac i{\hbar} \left\{
    \mcS_{in} + \mcS_{loop} + \mcS_{out}
    \right\} \ . 
\end{equation}
The $x_2$ and $x_4$ integrals are evaluated as in (\ref{eq:Klrcorner})
yielding,
\begin{multline}
    K_{loop} (x_a, x_b) = \frac{(P+Q)(P+R)}{2 \pi \hbar (P-qr)(q+r)}
    \iiint \ud x_1 \ud x_3 \ud x_5 \ 
    \exp \frac i{\hbar} \Bigg\{
    \mcS_{in} (x_a, x_1) + \mcS_{out} (x_5, x_b) 
\\
    - \frac{(P+Q)(P+R)}{(P-qr)(q+r)} (x_1 - x_5) x_3
    + \frac12 \left(
    \frac{P-qr}{q+r} - \frac{P(q+r)}{P-qr}
    \right) (x_1^2 - x_5^2)
    \Bigg\}  \ .
\end{multline}
The quadratic term in the exponent in $x_3$ disappears, and so the integral
$\ud x_3$ yields a Dirac delta function: $\delta (x_1 - x_5)$. Combined
with the integral over $x_5$ this forces $x_5=x_1$ (as expected) and
we finally conclude,
\begin{equation}
     K_{loop} (x_a, x_b) = 
     \int \ud x_1 \ 
     \exp \frac i{\hbar} \big\{
    \mcS_{in} (x_a, x_1) + \mcS_{out} (x_1, x_b)
    \big\} \ .
\label{eq:Kloop}
\end{equation}
Diagrammatically, this is equivalent to the disappearance of the loop,
shown in figure \ref{fig:timeloop}(ii).
Loops in the discrete variables therefore ``close'' and do not effect the 
overall propagator.

\begin{prop}
For the special choice of Lagrangians (\ref{eq:p=1.5lagrangians}),
the propagator along the time path $\Gamma$ (\ref{eq:gammapropagator})
is independent of the choice of $\Gamma$, depending only on the end points.
\end{prop}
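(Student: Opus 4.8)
The plan is to reduce the global path-independence to the three local identities already established: the corner-flip equality \eqref{eq:Kul=Kur}, the three-sided closure \eqref{eq:Ksq=Kb}, and the loop-collapse \eqref{eq:Kloop}. The guiding principle is that any two paths $\Gamma_1,\Gamma_2$ built from the one-step elements \eqref{eq:onestepprop}, \eqref{eq:onesteppropa} and sharing the same endpoints $(0,0)$ and $(N,M)$ can be transformed into one another by a finite sequence of such elementary moves, and that each move leaves the propagator \eqref{eq:gammapropagator} unchanged.

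First I would treat the case of monotone (forward-only) paths. Any two such paths from $(0,0)$ to $(N,M)$ contain exactly $N$ hat-steps and $M$ bar-steps, so their normalisation prefactors $\mcN_\Gamma$ coincide. Moreover, any such pair is connected by a finite sequence of elementary corner flips, each replacing a local ``right-then-up'' segment $\lrcorner$ by the ``up-then-right'' segment $\ulcorner$ between the same two corner vertices (see figure \ref{fig:Kcorners}). The decisive point is that a corner flip is a purely local operation: the intermediate vertex of the patch is the only variable that changes, it is an internal integration variable appearing solely in the two Lagrangian factors of that patch, and hence its integration factorises from the remainder of the path integral. What is left is precisely the patch propagator as a function of the two fixed boundary values, and \eqref{eq:Kul=Kur} asserts that this function is the same for both orientations. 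Thus each flip preserves $K_\Gamma$, and by induction all monotone paths yield the same propagator, depending only on the endpoints.

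Next I would remove the restriction to forward steps. A general admissible path differs from a monotone one by backward excursions; these are eliminated by the closure identity \eqref{eq:Ksq=Kb}, which collapses a forward-backward detour to the bare one-step propagator, and by the loop-collapse \eqref{eq:Kloop}, which shows that any closed loop integrates away to the single identity insertion $\int \ud x_1$ displayed in figure \ref{fig:timeloop}(ii). Applying these moves repeatedly reduces any path with backward steps to a monotone path with the same endpoints, to which the previous paragraph applies.

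The routine computations --- the Gaussian integrals underlying \eqref{eq:Kul=Kur}, \eqref{eq:Ksq=Kb} and \eqref{eq:Kloop} --- are already carried out, so the work here is combinatorial rather than analytic. I expect the main obstacle to be the bookkeeping that makes the locality argument watertight: one must verify that in the full path integral the flipped corner vertex really is an internal variable coupled only to the two Lagrangians of its patch, so that its integration commutes with all the others, and that the move alters neither the remaining set of integration variables nor the normalisation $\mcN_\Gamma$. Once this factorisation is justified, the global statement follows immediately from the local identities.
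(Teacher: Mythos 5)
Your proposal is correct and follows essentially the same route as the paper: the paper's proof likewise combines the three local identities (\ref{eq:Kul=Kur}), (\ref{eq:Ksq=Kb}) and (\ref{eq:Kloop}) with the observation that, the topology being simple, any curve $\Gamma_1$ can be taken to any curve $\Gamma_2$ with the same endpoints by a sequence of elementary deformations. Your version merely spells out the combinatorial details (monotone paths first, then elimination of backward excursions and loops, plus the locality of each move) that the paper leaves implicit.
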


\begin{pf}
Equations (\ref{eq:Kul=Kur}), (\ref{eq:Ksq=Kb}) and (\ref{eq:Kloop}) together
show that the propagator is unchanged under elementary deformations of the 
curve $\Gamma$. Since we have a simple topology, a curve $\Gamma_1$ can be 
deformed into any other curve $\Gamma_2$ 
(with the same endpoints) by a series of elementary deformations.
The proposition follows.
\end{pf}

The proposition now allows us to 
calculate the general propagator for $N$ steps in the hat direction and 
$M$ steps in the 
bar direction, compare (\ref{eq:gammapropagator}). 
We denote such a propagator from $x_a$ to $x_b$ 
by $K_{N,M} (x_a, x_b)$.
As a consequence of the path independence, it is then clear that
we can calculate this as
$
    K_{N,M} (x_a, x_b) = \int \ud x K_{N,0} (x_a, x) K_{0,M} (x, x_b)
$.
In other words, we can consider taking first all the hat-steps, followed by all
the bar-steps.
Taking our discrete propagator from (\ref{eq:discretepropagator}),
we can then carry out the integral as another Gaussian, but in fact the
result follows immediately from the group property of the propagator,
using its shared form with the continuous time case, so:
\begin{equation}
    K_{N,M} (x_a, x_b)    =
    \left( \frac{i \sqrt P}{\pi \hbar \sin (\mu N + \eta M)} \right)^{1/2}  
    \exp \left\{ \frac{i \sqrt P}{\hbar \sin (\mu N + \eta M)}
    \left( 2x_a x_b - (x_a^2 + x_b^2) \cos (\mu N + \eta M) \right) \right\} 
    \ , 
\label{eq:multipropagator}
\end{equation}
which bears a clear relation to the continuous time case.


\subsection{Uniqueness}
\label{sec:pathindepgeneral}

The time-path independence for the propagator of section 
\ref{sec:pathindepprop} is a special property of our choice of Lagrangian
(\ref{eq:p=1.5lagrangians}) that does not hold in general.
As classically the Lagrangian 1-form obeys the closure condition
(\ref{eq:1formclosure}), so in the
quantum case we have time-path independence of the propagators
as a natural quantum analogue. Whilst classically this closure holds
only on the equations of motion, in the quantum case the 
path-independence occurs as we perform the path integral over 
intermediate variables.
It emerges that, for given oscillator parameters $a$ and $b$,
there is a fairly unique choice of Lagrangians exhibiting time-path
independence.

Consider the generalised oscillator Lagrangians of equation 
(\ref{eq:generaloscillatorlagrangian}) and define propagators 
around two corners of a square, as in equations 
(\ref{eq:Klrcorner}) and (\ref{eq:Kulcorner}).
Here we allow $a$ and $b$ to be free oscillator parameters.
\begin{eqnarray}
    K_{\lrcorner} (x, \oh {\ob x}) &=& 
    \mcN_{\lrcorner} \int^{\infty}_{-\infty} \ud \oh x \
    \exp \left \{ \frac i{\hbar} \left( \mathcal L_b (x, \oh x) 
                + \mathcal L_a (\oh x, \oh {\ob x}) \right) \right \} \ ,
\label{eq:Klrcorner2}
\\
    K_{\ulcorner} (x, \oh {\ob x}) &=& 
    \mcN_{\ulcorner} \int^{\infty}_{-\infty} \ud \ob x \
    \exp \left \{ \frac i{\hbar} \left( \mathcal L_a (x, \ob x) 
                + \mathcal L_b (\ob x, \oh {\ob x}) \right) \right \} \ .
\label{eq:Kulcorner2}
\end{eqnarray}
$\mcN_{\lrcorner}$ and $\mcN_{\ulcorner}$ are undetermined
normalisation constants.
These paths are illustrated in figure \ref{fig:Kcorners}.

We demand equality of the exponents in these two expressions, once
the integral has been carried out; in other words we demand
$
    K_{\lrcorner} (x, \oh {\ob x}) =  K_{\ulcorner} (x, \oh {\ob x}) \ ,
$
up to a normalisation. Calculating these propagators via a Gaussian integral,
we then 
 derive conditions for time-path-independence on our coefficients, 
which can be found in \ref{app:pathindepgeneral}.
We find the necessary conditions on the coefficients:
\begin{equation}
    a_0 = \frac12 a + \frac f {2\alpha} 
\ , \quad
    b_0 = \frac12 b + \frac f {2\beta} 
\ , \quad
    \alpha = \frac{\gamma} {\sqrt{a^2-1}}
\ , \quad
    \beta = \frac{\gamma} {\sqrt{b^2-1}} \ .
\label{eq:pathindepconditions}
\end{equation}
As in (\ref{eq:closureconditions}) the constant $f$ makes no 
contribution and we ignore it. The general Lagrangians 
(\ref{eq:generaloscillatorlagrangian}) are therefore restricted to
a symmetric form, with a specified  overall constant given by the
oscillator parameters $a$, $b$.
Note that taking $a=(P-R)/(P+R)$, 
$b=(P-Q)/(P+Q)$ leads us to \emph{exactly} the conditions of (\ref{eq:closureconditions}) and the Lagrangians (\ref{eq:p=1.5lagrangians}). In conclusion:

\begin{prop}
For given oscillator parameters $a$ and $b$,
the Lagrangians (\ref{eq:p=1.5lagrangians}) are the unique Lagrangians,
up to constants $\gamma$ and $f$ (\ref{eq:closureconditions}),
such that the multi-time propagator is path independent.
\end{prop}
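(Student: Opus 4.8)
The plan is to read this as a rigidity statement and to extract it almost entirely from work already done. Starting from the four-parameter family of quadratic Lagrangians (\ref{eq:generaloscillatorlagrangian})---with $\alpha, a_0, \beta, b_0$ free once the oscillator parameters $a, b$ are fixed---I would show that demanding even the single most elementary instance of time-path independence, namely equality of the two one-square corner propagators $K_{\lrcorner}$ and $K_{\ulcorner}$ of (\ref{eq:Klrcorner2}) and (\ref{eq:Kulcorner2}), already forces the coefficients into the form (\ref{eq:pathindepconditions}), leaving only the two constants $\gamma$ and $f$ free. Uniqueness is then a counting statement, and the identification with (\ref{eq:p=1.5lagrangians}) together with sufficiency both follow from earlier results.

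First I would carry out the two single Gaussian integrations explicitly: $K_{\lrcorner}$ over the intermediate variable $\oh x$ and $K_{\ulcorner}$ over $\ob x$. Each completes a square in the integration variable and produces (i) an exponential of a quadratic form in the endpoints $(x, \oh{\ob x})$ and (ii) a normalisation prefactor proportional to the inverse square root of the coefficient of the squared integration variable. Next I would impose $K_{\lrcorner} = K_{\ulcorner}$, treating exponent and prefactor separately. Matching the coefficients of $x^2$, of the cross term $x\,\oh{\ob x}$, and of $\oh{\ob x}^2$ in the two exponents yields the relations $a_0 = \tfrac12 a + \tfrac{f}{2\alpha}$ and $b_0 = \tfrac12 b + \tfrac{f}{2\beta}$, the residual freedom being a single constant $f$. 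Finally, requiring the two Gaussian prefactors to agree up to an overall scale forces $\alpha\sqrt{a^2-1} = \beta\sqrt{b^2-1} = \gamma$, supplying the remaining two conditions in (\ref{eq:pathindepconditions}).

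With these four relations the counting is immediate: the four unknowns $(\alpha, \beta, a_0, b_0)$ are expressed through the two constants $(\gamma, f)$, so the path-independent Lagrangian is unique up to $\gamma$ and $f$. Discarding $f$, which---exactly as in the classical closure conditions (\ref{eq:closureconditions})---enters neither the equations of motion nor the propagator, and substituting the physical values $a=(P-R)/(P+R)$, $b=(P-Q)/(P+Q)$ reproduces (\ref{eq:p=1.5lagrangians}) verbatim. Sufficiency, that these Lagrangians give path independence along an \emph{arbitrary} curve $\Gamma$ rather than merely across one square, I would simply inherit from Proposition 1.

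The hard part will be the careful separation of genuine dynamical content from normalisation. The subtlety is that the conditions on $\alpha$ and $\beta$ arise precisely from the $\sqrt{a^2-1}$ and $\sqrt{b^2-1}$ factors thrown off when completing the square, so one must resist absorbing \emph{all} prefactors into the undetermined normalisations $\mcN_{\lrcorner}, \mcN_{\ulcorner}$: it is the \emph{relative} normalisation of the two routes that is physical and that pins down $\alpha$ and $\beta$. I would also need to ensure each Gaussian is well defined, extending the integration to the whole real line under the linearity assumption already used for (\ref{eq:onestepprop}) and treating convergence by the usual analytic continuation in the quadratic coefficient. These manipulations, routine once organised, are the content relegated to \ref{app:pathindepgeneral}.
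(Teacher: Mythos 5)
Your overall strategy --- imposing equality of the two one-square corner propagators $K_{\lrcorner}$ and $K_{\ulcorner}$ on the general quadratic family (\ref{eq:generaloscillatorlagrangian}) and reading off coefficient conditions --- is exactly the paper's route (worked out in \ref{app:pathindepgeneral}). However, the way you apportion the four conditions of (\ref{eq:pathindepconditions}) between exponent and prefactor is wrong, and as described your argument would fail to constrain $\alpha$ and $\beta$ at all.

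Carrying out the two Gaussians gives exponents whose $x^2$, $\oh{\ob x}^2$ and $x\,\oh{\ob x}$ coefficients must each agree, cf. (\ref{eq:Klrgen})--(\ref{eq:Kulgen}). The cross-term condition (\ref{eq:qcompcond3}) forces $\beta b_0+\alpha(a-a_0)=\alpha a_0+\beta(b-b_0)$, which yields the $a_0$, $b_0$ relations with the single free constant $f$ --- and, crucially, it \emph{also} makes the two Gaussian prefactors $\bigl(\pi i\hbar/(\beta b_0+\alpha(a-a_0))\bigr)^{1/2}$ and $\bigl(\pi i\hbar/(\alpha a_0+\beta(b-b_0))\bigr)^{1/2}$ automatically identical. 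So your final step, ``requiring the two Gaussian prefactors to agree'', is vacuous: it cannot force $\alpha\sqrt{a^2-1}=\beta\sqrt{b^2-1}$, and no $\sqrt{a^2-1}$ factors are thrown off by completing the square (once the $a_0,b_0$ relations hold the common denominator is simply $(\alpha a+\beta b)/2$). The missing constraint lives in the \emph{diagonal} exponent coefficients: the sum of the $x^2$- and $\oh{\ob x}^2$-conditions merely reproduces the cross-term condition, but their difference gives $\beta b-\alpha a=(\beta^2-\alpha^2)/(\alpha a+\beta b)$, i.e. $\alpha^2(a^2-1)=\beta^2(b^2-1)$, which is the true source of the last two relations in (\ref{eq:pathindepconditions}). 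As written, your exponent matching extracts only two of the four relations and your prefactor matching supplies nothing, so $\alpha$ and $\beta$ would remain free and uniqueness up to $\gamma$ and $f$ would not follow. The remainder of your plan (discarding $f$, substituting $a=(P-R)/(P+R)$, $b=(P-Q)/(P+Q)$ to recover (\ref{eq:p=1.5lagrangians}), and inheriting sufficiency from the earlier path-independence proposition) is sound.
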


In other 
words, demanding time-path independence of the propagator is the natural
quantum analogue of the closure relation on the Lagrangian.


\subsection{Quantum Variational Principle: Lagrangian 1-form case}
\label{sec:qvariation1d}

Consider a quantum mechanical evolution from an initial time $(0,0)$ to a new
time $(N,M)$, along a time-path $\Gamma$: shown in figure \ref{fig:gamma}. 
We can consider a propagator for the evolution 
$K_{\Gamma} ( x_b; x_a)$ defined in (\ref{eq:gammapropagator}).
We have shown that,
in the special case of Lagrangians (\ref{eq:p=1.5lagrangians}),
the propagator defined above is independent of the path $\Gamma$
(it depends only on the endpoints); but that this is not true in general.
For a generic Lagrangian, $K_{\Gamma}$ will depend on the time-path chosen,
as shown in section \ref{sec:pathindepgeneral}.

Classically, the system is defined as the critical point for the variation of the action over not
only the dependent variables, but also over the independent variables, i.e., it is a critical point with respect to 
the variation of the time-path. This not only yields all the compatible equations of motion for the system, but also 
selects certain ``permissible'' Lagrangians which obey a closure relation (\ref{eq:1formclosure}). 
This then yields a system of extended EL equations of which the Lagrangian can be considered to be the solution, 
cf. \cite{yoo2011discreteCM}.

\begin{figure}[hbt]
\begin{center}
\begin{tikzpicture}  [scale=1.3]

    \draw [->] (0,0) -- (5,0)   node [anchor=north] {$n$}   ;
    \draw [->] (0,0) -- (0,5)   node [anchor=east] {$m$} ;

    \draw (1,1) -- (1,3) 
    node [anchor = east] {(a)}
    -- (4,3) ;

    \draw [dashed] (1,1) -- (1.5,1) -- (1.5,2) -- (2,2) -- (2,5)
        -- (2.5,5) -- (2.5,2.5) -- (4,2.5) -- (4,3)  ;
    \draw (2,4) node [anchor = east] {(b)}   ;

    \draw [thick,gray] (1,1) -- (1,0.5) -- (3.5,0.5) -- (3.5,1.5) -- (3,1.5) -- 
              (3,1) -- (4,1) 
               -- (4,2) -- (4.5,2) -- (4.5,3) -- (4,3)    ;
    \draw (4,1.5) node [anchor=west] {(c)}     ;

    \filldraw [fill=black] (1,1) circle (0.04) 
              node[anchor=north east] {$(0,0)$};
    \filldraw [fill=black] (4,3) circle (0.04) 
              node[anchor=south west] {$(N,M)$} ;    

\end{tikzpicture}
\caption{Three possible paths in the time-variables.
Path (a) is a direct path. 
Path (b) extends for some distance in the $m$ direction before returning.
Path (c) includes a loop in the time variables.}
\label{fig:timepaths}
\end{center}
\end{figure}
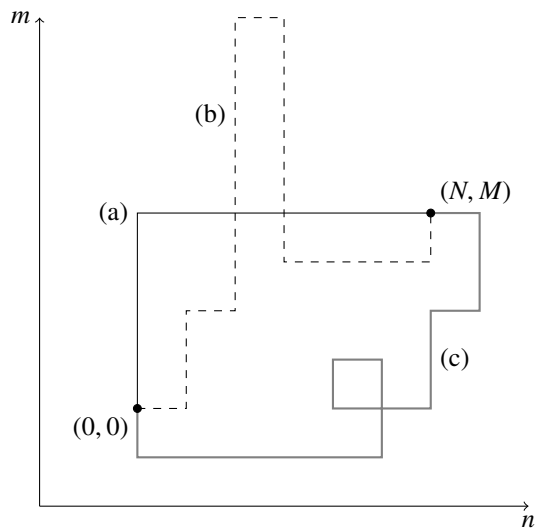

In the quantum case, we consider the dependence of the propagator on all possible
(discrete) time-paths $\Gamma$ between fixed initial and final times. 
In general, there are an infinite number of possible time paths from 
$(0,0)$ to $(N,M)$, including shortest time-paths as well as those with long
``diversions,'' or loops, as illustrated in figure \ref{fig:timepaths}. 
For a generic Lagrangian, as we vary the time path, each $\Gamma$  yields a different propagator
(\ref{eq:gammapropagator}) viewed as a functional of the path. 
In the special case of the Lagrangian (\ref{eq:p=1.5lagrangians}), 
however, the propagator $K_{\Gamma}$ is
\emph{independent of the path taken through the time variables},
and so remains unchanged across the variation of the time-path $\Gamma$.
This suggest that this path independence property is the natural
quantum analogue of the Lagrangian closure condition (\ref{eq:1formclosure}).

Pushing this idea one step further: 
viewing the propagator as a functional of the Lagrange function,
the Lagrangian itself can be thought of as representing a critical 
point (in a properly chosen function space of Lagrange functions) for the path-dependent propagator, 
with regard to variations of the time-path.
We suppose we can vary the path in 
such a way that the critical point analysis \emph{selects} the path independent Lagrangian 
from the space of possible Lagrangians (this was the point of view put forward in \cite{lobb2013variational} 
in the classical case). In a quantum setting this principle would be represented by a ``sum over all time-paths'' 
scenario, i.e. by means of posing a new quantum object of the form as was proposed in the continuous time-case 
in \cite{nijhoff2013lagrangian}. As a functional of the Lagrangian such an object would have a singular point 
for those Lagrangians which possess the quantum closure condition, i.e., those where the contributions of the 
path-independent propagators over which one integrates all contribute the same amount. How to control the 
singular behaviour of such an object is a matter of ongoing investigation.


\section{Quantisation of the Lattice Equation}

In section \ref{sec:lattice} we introduced the linear lattice equation
(\ref{eq:generallinear}). Having considered the quantisation of its finite
dimensional reduction, we now turn to quantisation of the lattice equation
itself. Quantisation of lattice models has been previously considered from
a canonical (quantum inverse scattering method) perspective 
\cite{volkov1992quantum,volkov1997quantum}, but here we will
bring a Lagrangian, path integral perspective to bear on this system.

Classically, we suppose the equation (\ref{eq:generallinear}) to hold on all plaquettes in the multidimensional
lattice at the same time. The equation is generated by the oriented Lagrangian:
\begin{equation}
    \mcL_{ij} (u, u_i, u_j; p_i, p_j) = 
      u (u_i - u_j) - \frac12 s_{ij} (u_i - u_j)^2 \ ,
    \qquad
      s_{ij} = \frac{p_i + p_j}{p_i - p_j} \ .
\label{eq:qlatticelagrange}
\end{equation}
The Lagrangian itself is a critical point of the classical variational principle over
surfaces: it obeys the closure property on the classical equations of motion, 
such that the surface can be allowed to freely vary under local moves. Indeed, it is 
also fairly unique, as seen in (\ref{eq:linearlagrangian2}).

How might we proceed to quantise such a system? A canonical approach is to transform
(\ref{eq:generallinear}) into an operator equation of motion, but we are concerned
here with a Lagrangian approach.
The clear analogy is to quantum field 
theory: we have a discretised space-time and a Lagrangian in two dimensions over field 
variables $u (\bn)$ indexed by a discrete vector $\bn$.
We imagine some space-time boundary $\partial \sigma$ enclosing a 
multi-dimensional
surface $\sigma$ made up of elementary plaquettes $\sigma_{ij}$. We can then construct an action
by summing the directed Lagrangians over the surface, as we would classically:
\begin{equation}
    \mcS_{\sigma} = \sum_{\sigma_{ij} \in \sigma} \mcL_{ij} (u,u_i, u_j) \ ,
\end{equation}
where we define the shorthand $\mcL_{ij} (u):= \mcL (u, u_i, u_j; p_i, p_j)$.

We then consider the propagator $K_{\sigma} (\partial \sigma)$,
 where all interior field
variables on the surface are integrated over. The propagator depends,
in principle, on the 
surface $\sigma$ and is a function of the field variables on the 
boundary $\partial \sigma$, which form some boundary value
problem
(see a similar point made in \cite{rovelli2011structure}):
\begin{equation}
    K_{\sigma} (\partial \sigma) = 
    \int [\mcD u (\bn)]_{\sigma} \ e^{i \mcS_{\sigma} [u (\bn)]/ \hbar}  
\ 
    =
    \mcN_{\sigma} \prod_{\bn \in \sigma} \int \ud u (\bn) \ e^{i \mcS_{\sigma} [u (\bn)]/ \hbar} \ .
\label{eq:correlation}
\end{equation}
We will see as we go on that this object is subject to infra-red divergences,
as particular surface configurations produce integrations yielding volume factors.
Since our main statements involves only the combinatorics of the exponential factors involving 
the action arising through Gaussian integrals, we tacitly assume $K_{\sigma}$ can be renormalised by an appropriate 
choice of  normalisation factor $\mcN_{\sigma}$. 
$K_{\sigma} (\partial \sigma)$ describes a propagator in the sense
of a surface gluing procedure: two propagators $K_{\sigma_1}$ and $K_{\sigma_2}$
are combined to a new propagator by multiplication and integration
over all variables living on the shared boundary 
$\partial \sigma_1 \cap \partial \sigma_2$. 
Thus, the one-step surface gluing can be written symbolically as 
\begin{equation}
  K_{\sigma_1 \cup \sigma_2} = 
  \int_{\partial \sigma_1 \cap \partial \sigma_2} K_{\sigma_1} \ast K_{\sigma_2} \ 
  :=
  \mcN_{\partial \sigma_1 \cap \partial \sigma_2}
  \left[ \prod_{\bn \in \partial \sigma_1 \cap \partial \sigma_2}
  \int \ud u(\bn)  \ \right] \
  K_{\sigma_1} (\partial \sigma_1) . K_{\sigma_2} (\partial \sigma_2)
  \ ,
\end{equation} 
where the integral is over appropriately chosen coordinates of the joined boundary. Iterating the gluing formula 
is tantamount to setting up a ``surface-slicing'' procedure for the path integral.

\subsection{Motivation: The pop-up cube}

Classically, for a Lagrangian 2-form we vary the surface $\sigma$ so that the 
Lagrangian and equations of motion sit at a critical point: the action should be
invariant under the variation of not only the dependent variables $u$, but also the
variation of the surface itself. As we move to the quantum regime, we then naturally
ask what happens to our propagator $K_{\sigma} (\partial \sigma)$ 
(\ref{eq:correlation}) under variation 
of the surface $\sigma$? We consider the effect of a simple variation of the
surface: from a flat surface to a popped-up cube, see figure \ref{fig:popup}.

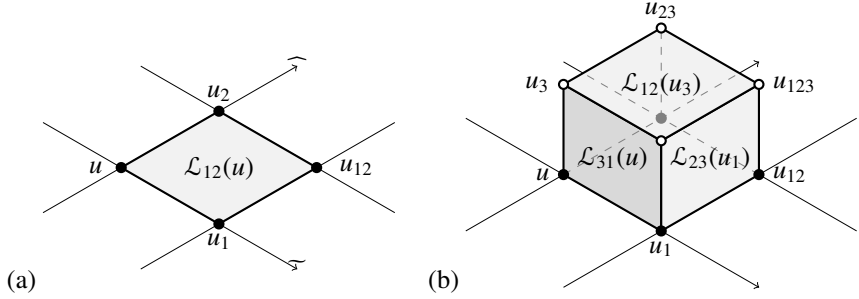
\begin{figure}[htp]
\begin{center}
(a)
\begin{tikzpicture}  [scale=1.5]
  \begin{scope} [very thin]
    \draw [->]  (30:-0.8) -- (30:1.8) node [black] {$\widehat{\phantom u}$} ;
    \draw [->]  (150:0.8) -- (150:-1.8) node [black] {$\widetilde{\phantom u}$} ;
    \draw   (330:1)  +(30:-0.8) -- +(30:1.8) ;
    \draw   (30:1)  +(150:0.8) -- +(150:-1.8) ;
  \end{scope}
    \filldraw [fill=black!5!white , draw =black, thick] 
              (0,0) -- ++ (30:1) -- ++(330:1) -- ++(210:1)  -- cycle  ;
    \filldraw [fill=black, thick] circle (0.04) 
          node[anchor=east] {$u \, \,$} ;
    \filldraw [fill=black, thick] (30:1) circle (0.04) 
          node[anchor=south] {$u_2$} ;    
    \filldraw [fill=black, thick] (330:1) circle (0.04)
          node[anchor = north] {$u_1$} ;
    \filldraw [fill=black, thick] (30:1) ++(330:1) circle (0.04) 
          node[anchor = west] {$ \, \, \, u_{12}$} ;
    \draw (30:1) +(0,-0.5) node {$\mathcal L_{12}(u)$}  ;
\end{tikzpicture}
\quad
(b)
\begin{tikzpicture}  [scale=1.5]
  \begin{scope} [very thin]
    \draw [->]  (30:-1) -- (30:2)  ;
    \draw [->]  (150:1) -- (150:-2)  ;
    \draw   (330:1)  +(30:-1) -- +(30:2) ;
    \draw   (30:1)  +(150:1) -- +(150:-2) ;
  \end{scope}
    \filldraw [fill=black!15!white , thick]  (0,0) -- ++(0,0.8) -- ++ (330:1) -- ++(0,-0.8) -- cycle  ; 
    \filldraw [fill=black!5!white , thick]  (330:1) -- ++(0,0.8) -- ++(30:1) -- ++(0,-0.8) -- cycle  ;
      
    \filldraw  [fill=black!5!white , thick] [yshift=0.8cm]
              (0,0) -- ++ (30:1) -- ++(330:1) -- ++(210:1)  -- cycle  ;
    \draw [dashed, very thin, gray] (0,0) -- (30:1.8) ;
    \draw [dashed, very thin, gray] (30:1) +(330:1) -- +(150:0.8) ;
    \draw [dashed, very thin, gray] (30:1) -- +(0,0.8) ;
    \filldraw [fill=black, thick] circle (0.04) 
          node[anchor=east] {$u  \,$} ;
    \filldraw [fill=gray , draw=gray, thick] (30:1) circle (0.04) ;    
    \filldraw [fill=black, thick] (330:1) circle (0.04)
          node[anchor = north] {$u_1$} ;
    \filldraw [fill=black, thick] (30:1) ++(330:1) circle (0.04) 
          node[anchor = west] {$ \, u_{12}$} ;
  \begin{scope}  [yshift=0.8cm]  
    \filldraw [fill=white, thick] circle (0.04)
          node [anchor=east] {$ u_3 \, $} ;
    \filldraw [fill=white, thick] (30:1) circle (0.04) 
          node [anchor=south] {$ u_{23}$} ;    
    \filldraw [fill=white, thick] (330:1) circle (0.04)
                      ;
    \filldraw [fill=white, thick] (30:1) ++(330:1) circle (0.04)
          node [anchor=west] {$ \, u_{123} $} ;
  \end{scope}
    \draw (30:1) +(0,0.3) node {$\mathcal L_{12} (u_3) $}  ;
    \draw (330:0.5) +(0,0.4) node {$\mathcal L_{31} (u)$}  ;
    \draw (330:1) ++(30:0.5) +(0,0.4) node {$\mathcal L_{23} (u_1)$}  ;
\end{tikzpicture}
\end{center}
\caption{A flat surface in (a), compared to a pop-up cube shown
in (b)}
\label{fig:popup}
\end{figure}

The contribution to the action given by surface (a) is a single Lagrangian,
$\mcL_{12} (u)$. In surface (b) we have five plaquettes, with a 
contribution to the action given by a sum of oriented Lagrangians:
$
    \mcS_{pop} [u_{n,m}] = \mcL_{23} (u_1) + \mcL_{31} (u_2) + \mcL_{12} (u_3)
      - \mcL_{23} (u) - \mcL_{31} (u)
$.
Note that the orientations lead to the negative contributions. 
In our path integral perspective (\ref{eq:correlation}), in (b) we must
also integrate over the ``popped-up'' variables $u_3$, $u_{23}$, 
$u_{31}$, $u_{123}$. 
The boundary variables on which the contributions depend are $u$,
$u_1$, $u_2$ and $u_{12}$.
So altogether, the contribution to the propagator for the pop-up cube is
like this:
\begin{equation}
    K_{pop} = \iiiint \ud u_{3} \ud u_{31} \ud u_{23} \ud u_{123} \
       \exp \left( \frac i{\hbar} \mcS_{pop} [u_{n,m}] \right) \ .
\label{eq:popup1}
\end{equation}
Now note that $\mcS_{pop} [u_{n,m}]$ contains no factor of $u_{123}$,
so that the integral $\int \ud u_{123}$ produces a volume factor $V$.
Equation (\ref{eq:popup1}) can then be written in a matricial form:
\begin{equation}
    K_{pop} = V \int \ud^3 \bu \
     \exp \frac i{\hbar} \left(\frac12 \bu^T A \bu + \bB^t \bu \ 
    + \frac12 
         \left[s_{31} (u_1^2 - u_{12}^2) + s_{23} (u_2^2 - u_{12}^2)
          + (u + u_{12})(u_1 - u_2)
       \right]        \right) \ ,
\label{eq:popup2}
\end{equation}
where
$
    \bu^T = (u_{3}, u_{31}, u_{23}) \ 
$, $ \ 
    \bB^T = ( -s_{31} u_1 - s_{23} u_2 , -u_1 + s_{23} u_{12} , 
                 u_2 + s_{31} u_{12} ) \ 
$, and
\begin{equation}
    A = 
        \left( \begin{array}{ccc}
                s_{23} + s_{31}  &  1  &  -1   \\
                1  &  -(s_{12} + s_{23})  &  s_{12}  \\
                -1  &  s_{12}  &  -(s_{12} + s_{31})
               \end{array}   \right)  \ .
\end{equation}
Now, in principle, equation (\ref{eq:popup2}) could be solved as a set of three
Gaussian integrals, but matrix $A$ is in fact singular. 
The parameter identity for $s_{ij}$ (\ref{eq:qlatticelagrange}):
\begin{equation}
    s_{12} s_{23} + s_{23} s_{31} + s_{31} s_{12} + 1 = 0 \ ,
\label{eq:sijidentity}
\end{equation}
leads to $\det A = 0$.   
We therefore resolve (\ref{eq:popup2}) by carrying out 
\emph{two} Gaussian integrals, knowing for the third integration variable we 
shall be left with an exponent that is at most linear.
Performing Gaussian integrations with respect to $u_3$ and $u_{31}$, we therefore have:
\begin{equation}
    K_{pop} = V \frac{2 \pi \hbar}{s_{23}} \int \ud u_{23}
              \exp \frac i{\hbar} \left(
              u (u_1 - u_2) - \tfrac12 s_{12} (u_1 - u_2)^2
              \right)
\ 
    =  V^2 \frac{2 \pi \hbar}{s_{23}}
           \exp \left( \frac i{\hbar} \mcL_{12} (u, u_1, u_2) \right) \ ,
\label{eq:popup3}
\end{equation}
where in the first equality we note that all terms containing $u_{23}$ have
vanished entirely.
This is now \emph{exactly} the exponent expected from the diagram (a) in figure
\ref{fig:popup}. So, whilst it is clear that there are non-trivial issues to
resolve with respect to volume factors and normalisation factors in (\ref{eq:popup3}),
\footnote{The asymmetrical factor of $s_{23}$ in the prefactor is an indicator that renormalisation requires some careful thought.}
in the critical issue of the contribution to the \emph{action} in the exponent
between diagrams \ref{fig:popup}(a) and \ref{fig:popup}(b), the two
pictures make the \emph{same} contibution. In other words, there is some sense
in which the action is unchanged by the local move that transforms the surface
$\sigma$ by the pop-up cube. Inspired by this discovery, we consider a more
general situation.

\subsection{Surface Independence of the propagator}
\label{sec:elementaryspecial}

In the classical case,  there are three
\emph{elementary configurations} of Lagrangians in three dimensions, that form the basis of all other possible configurations \cite{lobb2013variational}. 
We can attach to these  configurations
three \emph{elementary moves} in the quantum mechanical case that form the basis for
deformations of the surface $\sigma$. These elementary moves are shown in figures
\ref{fig:elementarya}, \ref{fig:elementaryb} and \ref{fig:elementaryc}.
Combined with the pop-up cube of figure \ref{fig:popup} these give a full set
of local moves for deforming the surface $\sigma$.

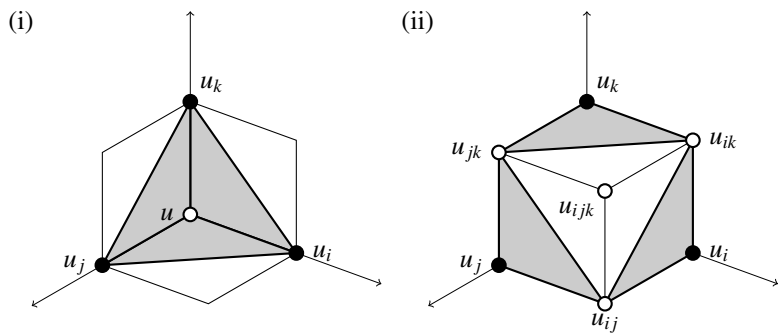
\begin{figure}[htp]
\begin{center}
\begin{tikzpicture}  [scale=1.5]
  \begin{scope} [very thin]
    \draw [->]  (0,0) -- (-20:1.8)  ;
    \draw [->]  (0,0) -- (210:1.62)  ;
    \draw [->]  (0,0) -- (0,1.8)    ;
  \end{scope}
    \draw (0,1) -- ++(340:1) -- ++(0,-1) -- 
          ++(210:0.9) -- ++ (160:1) -- ++ (0,1)
          -- cycle ;
    \filldraw [fill=black!20!white , draw =black, thick] 
         (0,1) -- (0,0) -- (340:1) -- cycle  ;
    \filldraw [fill=black!20!white , draw =black, thick] 
         (210:0.9) -- (0,0) -- (340:1) -- cycle  ;
    \filldraw [fill=black!20!white , draw =black, thick] 
         (0,1) -- (0,0) -- (210:0.9) -- cycle  ;
    \filldraw [fill=white, thick] circle (0.06) 
          node[anchor=east] {$u \ $} ;
    \filldraw [fill=black, thick] (210:0.9) circle (0.06)
          node[anchor = east] {$u_j \ $} ;
    \filldraw [fill=black, thick] (340:1) circle (0.06) 
          node[anchor=west] {$ \ u_i$} ;    
    \filldraw [fill=black, thick] (0,1)  circle (0.06) 
          node[anchor = south west] {$  u_k$} ;
   \draw (-1.5,1.7)  node {(i)} ;
   \begin{scope} [xshift=100pt]
  \begin{scope} [very thin]
    \draw [->]  (-20:1) -- (-20:1.8)  ;
    \draw [->]  (210:0.9) -- (210:1.62)  ;
    \draw [->]  (0,1) -- (0,1.8)    ;
  \end{scope}
    \draw (0,1) ++(340:1) ++(210:0.9) -- +(0,-1) ;
    \draw (0,1) ++(340:1) ++(210:0.9) -- +(160:1) ;
    \draw (0,1) ++(340:1) ++(210:0.9) -- +(30:0.9) ;
    \filldraw [fill=black!20!white , draw =black, thick] 
         (0,1) ++(210:0.9) -- (0,1) -- ++(-20:1) -- cycle ;
    \filldraw [fill=black!20!white , draw =black, thick] 
         (210:0.9) ++(0,1) -- (210:0.9) -- ++(340:1) -- cycle  ;
    \filldraw [fill=black!20!white , draw =black, thick] 
         (-20:1) ++(0,1) -- (-20:1) -- ++(210:0.9) -- cycle  ;
    \filldraw [fill=white, thick] (0,1) ++(340:1) ++(210:0.9)
          circle (0.06) 
          node[anchor=north east] {$u_{ijk} $} ;
    \filldraw [fill=white, thick] (0,1) ++(210:0.9) circle (0.06)
          node[anchor = east] {$u_{jk} \ $} ;
    \filldraw [fill=white, thick] (0,1) ++(340:1) circle (0.06) 
          node[anchor=west] {$ \ u_{ik}$} ;    
    \filldraw [fill=white, thick] (210:0.9)++(340:1)  circle (0.06) 
          node[anchor = north] {$ u_{ij}$} ;
    \filldraw [fill=black, thick] (210:0.9) circle (0.06)
          node[anchor = east] {$u_j \ $} ;
    \filldraw [fill=black, thick] (340:1) circle (0.06) 
          node[anchor=west] {$ \ u_i$} ;    
    \filldraw [fill=black, thick] (0,1)  circle (0.06) 
          node[anchor = south west] {$  u_k$} ;
   \draw (-1.5,1.7)  node {(ii)} ;
    \end{scope}
\end{tikzpicture}
\end{center}
\caption{Elementary move (a). We pass between (i) and (ii); white circles
indicate variables to be integrated over in the move.}
\label{fig:elementarya}
\end{figure}

The first move is shown in figure \ref{fig:elementarya}.
The action and contribution to the propagator (\ref{eq:correlation}) 
for figure \ref{fig:elementarya}(i) are given by:
\begin{equation}
    \mcS_{(ai)} = \mcL_{ij}(u) + \mcL_{jk} (u) + \mcL_{ki} (u) 
\ , \quad
    K_{(ai)} = \mcN_{(ai)} 
    \int \ud u \exp \left[ i \mcS_{(ai)} / \hbar \right] \ .
\label{eq:Ka1}
\end{equation}                   
In contrast, for figure \ref{fig:elementarya} (ii):
\begin{equation}
    \mcS_{(aii)} = \mcL_{ij}(u_k) + \mcL_{jk} (u_i) + \mcL_{ki} (u_j) 
\ , \quad
    K_{(aii)} = \mcN_{(aii)}
    \iiiint \ud u_{ij} \ud u_{jk} \ud u_{ki} \ud u_{ijk}
     \exp \left[ i \mcS_{(aii)} / \hbar \right] \ .
\label{eq:Ka2}
\end{equation}
We have some issue in both of these cases with volume factors appearing
in the evaluation; but we proceed under the assumption that these
can be dealt with through some regularisation and normalisation.
As shown in \ref{app:elmovea}, we then find that the \emph{exponents}
 in $K_{(ai)}$ and $K_{(aii)}$ are the same.
With the correct choice of normalisation and regularisation, we have
identical contributions to the propagator.

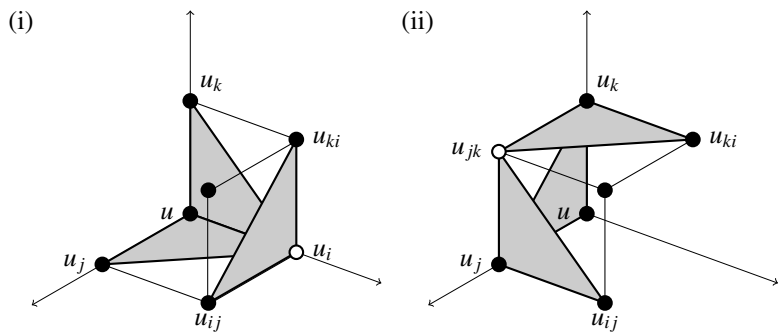
\begin{figure}[htp]
\begin{center}
   \begin{tikzpicture}  [scale=1.5]
  \begin{scope} [very thin]
    \draw [->]  (0,0) -- (-20:1.8)  ;
    \draw [->]  (0,0) -- (210:1.62)  ;
    \draw [->]  (0,0) -- (0,1.8)    ;
  \end{scope}
    \filldraw [fill=black!20!white , draw =black, thick] 
         (0,1) -- (0,0) -- (340:1) -- cycle  ;
    \filldraw [fill=black!20!white , draw =black, thick] 
         (210:0.9) -- (0,0) -- (340:1) -- cycle  ;
    \draw (-21:1) -- ++(0,1) -- ++(210:0.9) -- ++(0,-1) -- cycle;
    \draw (0,1) -- ++(-20:1) ;
    \draw (210:0.9) -- ++(-20:1) ;
    \filldraw [fill=black!20!white , draw =black, thick] 
         (-20:1) ++(0,1) -- (-20:1) -- ++(210:0.9) -- cycle  ;
    \filldraw [fill=black, thick] circle (0.06) 
          node[anchor=east] {$u \ $} ;
    \filldraw [fill=black, thick] (210:0.9) circle (0.06)
          node[anchor = east] {$u_j \ $} ;
    \filldraw [fill=white, thick] (340:1) circle (0.06) 
          node[anchor=west] {$ \ u_i$} ;    
    \filldraw [fill=black, thick] (0,1)  circle (0.06) 
          node[anchor = south west] {$  u_k$} ;
    \filldraw [fill=black, thick] (0,1) ++(340:1) circle (0.06) 
          node[anchor=west] {$ \ u_{ki}$} ;    
    \filldraw [fill=black, thick] (210:0.9)++(340:1)  circle (0.06) 
          node[anchor = north] {$ u_{ij}$} ;
    \filldraw [fill=black, thick] (0,1) ++(340:1) ++(210:0.9)
          circle (0.06) 
           ;
   \draw (-1.5,1.7)  node {(i)} ;
   \begin{scope} [xshift=100pt]
  \begin{scope} [very thin]
    \draw [->]  (0,0) -- (-20:1.8)  ;
    \draw [->]  (0,0) -- (210:1.62)  ;
    \draw [->]  (0,0) -- (0,1.8)    ;
  \end{scope}
    \filldraw [fill=black!20!white , draw =black, thick] 
         (0,1) -- (0,0) -- (210:0.9) -- cycle  ;
    \draw (0,1) ++(340:1) ++(210:0.9) -- +(0,-1) ;
    \draw (0,1) ++(340:1) ++(210:0.9) -- +(160:1) ;
    \draw (0,1) ++(340:1) ++(210:0.9) -- +(30:0.9) ;
    \filldraw [fill=black!20!white , draw =black, thick] 
         (0,1) ++(210:0.9) -- (0,1) -- ++(-20:1) -- cycle ;
    \filldraw [fill=black!20!white , draw =black, thick] 
         (210:0.9) ++(0,1) -- (210:0.9) -- ++(340:1) -- cycle  ;
    \filldraw [fill=black, thick] circle (0.06) 
          node[anchor=east] {$u \ $} ;
    \filldraw [fill=black, thick] (210:0.9) circle (0.06)
          node[anchor = east] {$u_j \ $} ;
    \filldraw [fill=black, thick] (0,1)  circle (0.06) 
          node[anchor = south west] {$  u_k$} ;
    \filldraw [fill=black, thick] (0,1) ++(340:1) circle (0.06) 
          node[anchor=west] {$ \ u_{ki}$} ;    
    \filldraw [fill=black, thick] (210:0.9)++(340:1)  circle (0.06) 
          node[anchor = north] {$ u_{ij}$} ;
    \filldraw [fill=white, thick] (0,1) ++(210:0.9) circle (0.06)
          node[anchor = east] {$u_{jk} \ $} ;
    \filldraw [fill=black, thick] (0,1) ++(340:1) ++(210:0.9)
          circle (0.06) 
           ;
   \draw (-1.5,1.7)  node {(ii)} ;
    \end{scope}   
\end{tikzpicture}
\end{center}
\caption{Elementary move (b). White circles indicate integration variables.}
\label{fig:elementaryb}
\end{figure}

We then consider elementary move (b), shown in figure
\ref{fig:elementaryb}. We have the action and propagator contribution 
for figure \ref{fig:elementaryb}(i):
\begin{equation}
    \mcS_{(bi)} = \mcL_{ij}(u) + \mcL_{ki} (u) - \mcL_{jk} (u_i) 
\ , \quad
    K_{(bi)} = \mcN_{(bi)}  \int \ud u_{i}
     \exp \left[ i \mcS_{(bi)} / \hbar \right] \ .
\end{equation}
Similarly for figure \ref{fig:elementaryb}(ii):
\begin{equation}
    \mcS_{(bii)} = \mcL_{ij}(u_k) + \mcL_{ki} (u_j) - \mcL_{jk} (u) 
\ , \quad
    K_{(bii)} = \mcN_{(bii)}  \int \ud u_{jk}
     \exp \left[ i \mcS_{(bii)} / \hbar \right] \ .
\end{equation}
In this case, no volume factors appear and we find 
$
K_{(bii)} = K_{(bi)}
$.
So the contributions to the propagator are directly identical here.

\begin{figure}[htp]
\begin{center}
   \begin{tikzpicture}  [scale=1.5]
  \begin{scope} [very thin]
    \draw [->]  (0,0) -- (-20:1.8)  ;
    \draw [->]  (0,0) -- (210:1.62)  ;
    \draw [->]  (0,0) -- (0,1.8)    ;
  \end{scope}
    \draw (0,1) ++(340:1) ++(210:0.9) -- +(0,-1) ;
    \draw (0,1) ++(340:1) ++(210:0.9) -- +(160:1) ;
    \draw (0,1) ++(340:1) ++(210:0.9) -- +(30:0.9) ;
    \filldraw [fill=black!20!white , draw =black, thick] 
         (0,1) ++(210:0.9) -- (0,1) -- ++(-20:1) -- cycle ;
    \filldraw [fill=black!20!white , draw =black, thick] 
         (210:0.9) ++(0,1) -- (210:0.9) -- ++(340:1) -- cycle  ;
    \filldraw [fill=black, thick] (210:0.9) circle (0.06)
          node[anchor = east] {$u_j \ $} ;
    \filldraw [fill=black, thick] (0,1)  circle (0.06) 
          node[anchor = south west] {$  u_k$} ;
    \filldraw [fill=black, thick] (0,1) ++(340:1) circle (0.06) 
          node[anchor=west] {$ \ u_{ki}$} ;    
    \filldraw [fill=black, thick] (210:0.9)++(340:1)  circle (0.06) 
          node[anchor = north] {$ u_{ij}$} ;
    \filldraw [fill=white, thick] (0,1) ++(210:0.9) circle (0.06)
          node[anchor = east] {$u_{jk} \ $} ;
    \filldraw [fill=white, thick] (0,1) ++(340:1) ++(210:0.9)
          circle (0.06) 
          node[anchor=north west] {$u_{ijk} $} ;
   \draw (-1.5,1.7)  node {(i)} ;
   \begin{scope} [xshift=100pt]
  \begin{scope} [very thin]
    \draw [->]  (0,0) -- (-20:1.8)  ;
    \draw [->]  (0,0) -- (210:1.62)  ;
    \draw [->]  (0,0) -- (0,1.8)    ;
  \end{scope}
    \filldraw [fill=black!20!white , draw =black, thick] 
         (0,1) -- (0,0) -- (340:1) -- cycle  ;
    \filldraw [fill=black!20!white , draw =black, thick] 
         (210:0.9) -- (0,0) -- (340:1) -- cycle  ;
    \filldraw [fill=black!20!white , draw =black, thick] 
         (0,1) -- (0,0) -- (210:0.9) -- cycle  ;
    \draw (-21:1) -- ++(0,1) -- ++(210:0.9) -- ++(0,-1) -- cycle;
    \draw (0,1) -- ++(-20:1) ;
    \draw (210:0.9) -- ++(-20:1) ;
    \draw (0,1) -- ++(210:0.9) -- ++(0,-1) ;
    \filldraw [fill=black!20!white , draw =black, thick] 
         (-20:1) ++(0,1) -- (-20:1) -- ++(210:0.9) -- cycle  ;
    \filldraw [fill=white, thick] circle (0.06) 
          node[anchor=east] {$u \ $} ;
    \filldraw [fill=black, thick] (210:0.9) circle (0.06)
          node[anchor = east] {$u_j \ $} ;
    \filldraw [fill=white, thick] (340:1) circle (0.06) 
          node[anchor=west] {$ \ u_i$} ;    
    \filldraw [fill=black, thick] (0,1)  circle (0.06) 
          node[anchor = south west] {$  u_k$} ;
    \filldraw [fill=black, thick] (0,1) ++(340:1) circle (0.06) 
          node[anchor=west] {$ \ u_{ki}$} ;    
    \filldraw [fill=black, thick] (210:0.9)++(340:1)  circle (0.06) 
          node[anchor = north] {$ u_{ij}$} ;
   \draw (-1.5,1.7)  node {(ii)} ;
    \end{scope}     
\end{tikzpicture}
\end{center}
\caption{Picture for elementary move (c)}
\label{fig:elementaryc}
\end{figure}
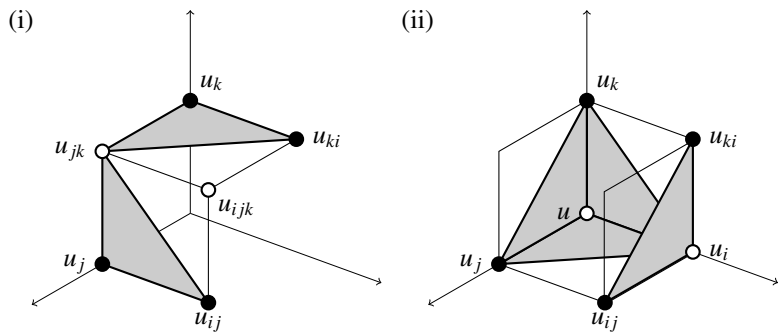

Lastly, consider elementary move (c) shown in figure \ref{fig:elementaryc}.
These bear a clear relation to figure \ref{fig:elementaryb}: the element
$\mcL_{jk} (u)$ has been shifted from one diagram to the other, inducing
also a slight change in the integration variables. 
For \ref{fig:elementaryc}(i):
\begin{equation}
    \mcS_{(ci)} = \mcL_{ij}(u_k) + \mcL_{ki} (u_j) 
\ , \quad
    K_{(ci)} = \mcN_{(ci)} \iint \ud u_{jk} \ud u_{ijk}
     \exp \left[ i \mcS_{(ci)} / \hbar \right] \ .
\end{equation}
Similarly, \ref{fig:elementaryc}(ii) is derived from \ref{fig:elementaryb}(i)
with an additional integral over $u$.
\begin{equation}
    \mcS_{(cii)} = \mcL_{ij}(u) + \mcL_{jk} (u)
                          + \mcL_{ki} (u) - \mcL_{jk} (u_i) 
\ , \quad
    K_{(cii)} = \mcN_{(cii)} \iint \ud u \ud u_{i}  \ 
     \exp \left[ i \mcS_{(cii)} / \hbar \right] \ .
\end{equation}
Once more we find that 
$
K_{(cii)} = K_{(ci)}
$
(although this time a volume factor is involved on both sides) and
the contributions to the propagator are the same.

\begin{prop}
The system characterised by Lagrangian (\ref{eq:qlatticelagrange})
is \emph{independent} of the choice of surface $\sigma$,
up to the choice of normalising constants.
\label{prop:latticeindependent}
\end{prop}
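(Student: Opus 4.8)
The plan is to mirror the proof of Proposition~1 for the $1$-form case: show that the propagator is invariant under each member of a complete set of \emph{elementary local moves} on the surface, and then observe that any two surfaces sharing the boundary $\partial\sigma$ are joined by a finite sequence of such moves. The relevant moves are exactly the four already analysed in the preceding subsections: the pop-up cube of figure~\ref{fig:popup} and the three moves of figures~\ref{fig:elementarya}, \ref{fig:elementaryb} and \ref{fig:elementaryc}. For each of these the contribution to the \emph{exponent} of the propagator (\ref{eq:correlation})---the action appearing in the Gaussian integrand---has been shown to be unchanged once the interior field variables are integrated out: equation (\ref{eq:popup3}) for the cube, the appendix computation for move (a), and the direct identities $K_{(bii)}=K_{(bi)}$ and $K_{(cii)}=K_{(ci)}$ for moves (b) and (c).

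First I would invoke the combinatorial fact, taken over from the classical analysis of \cite{lobb2013variational}, that these three configurations together with the cube generate all elementary rearrangements of plaquettes in the three-dimensional lattice. Any surface $\sigma$ can then be deformed into any other surface $\sigma'$ with $\partial\sigma=\partial\sigma'$ by a finite sequence of these moves: geometrically, the region enclosed between $\sigma$ and $\sigma'$ is filled by elementary cubes, and the surface is swept across one cube at a time, each crossing being effected by the pop-up move with the local reshuffling of plaquette orientations handled by moves (a)--(c). This reduces the global statement to the local invariances already in hand.

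Assembling these, since every elementary move leaves the exponent identical and alters only the prefactor by a constant, composing the moves along a path from $\sigma$ to $\sigma'$ shows that $K_\sigma(\partial\sigma)$ and $K_{\sigma'}(\partial\sigma)$ carry the same action in the exponent and differ at most by an overall multiplicative constant. Absorbing this accumulated factor into the normalisation $\mcN_\sigma$ yields surface independence ``up to the choice of normalising constants,'' as claimed.

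The main obstacle is the bookkeeping of these prefactors. As flagged around (\ref{eq:popup3}), an individual move may emit a volume factor (from an integration variable that drops out of the action, producing a $\delta$-function or divergent contribution) together with an asymmetric constant such as the $s_{23}$ appearing there. To make the ``up to normalising constants'' clause rigorous one must check that every such factor is \emph{independent of the boundary data} $\partial\sigma$---that it depends only on the lattice parameters $p_i$ and not on the external field variables---so that it genuinely factors out of the propagator and can be lumped into $\mcN_\sigma$. This is precisely where a consistent regularisation of the infra-red (volume) divergences is required, and it is to verifying this parameter-only dependence, move by move, that I would devote the bulk of the effort; the exponent, which carries all the physical content, is then manifestly preserved.
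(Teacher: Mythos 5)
Your proposal follows essentially the same route as the paper's own proof: invariance of the exponent under the pop-up cube and the three elementary moves, followed by the observation that any two surfaces with the same boundary are connected by a finite sequence of such moves, with residual prefactors absorbed into $\mcN_\sigma$. Your additional attention to verifying that the emitted volume factors and constants (such as the $s_{23}$ in (\ref{eq:popup3})) depend only on lattice parameters and not on boundary data is a reasonable tightening of the ``up to normalising constants'' clause, which the paper itself leaves implicit.
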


\begin{proof}
The combination of elementary moves above,
combined with the pop-up of figure \ref{fig:popup}, 
allows us to deform any surface $\sigma$ to another
topologically equivalent surface $\sigma'$ by a series of
elementary moves, without changing the exponent in the
propagator. This free deformation gives us independence from
the surface.
\end{proof}

An obvious consequence is that the propagator 
(\ref{eq:correlation}) depends only on the 
surface boundary $\partial \sigma$, and the field variables specified
there - i.e. it is a function only of the boundary value problem.
Note that since different topologies are
specified by changes of the boundary, we have not considered these explicitly.

\subsection{Uniqueness}
\label{sec:elementarygeneral}

The Lagrangian (\ref{eq:qlatticelagrange}) has the property
that it produces a propagator (\ref{eq:correlation}) which is
independent of variations of the surface $\sigma$.
In fact, it turns out that (\ref{eq:qlatticelagrange}) is the
unique quadratic Lagrangian 2-form such that this holds.
Consider a general, 3-point, quadratic
Lagrangian, imposing antisymmetry under interchange of $i$ and $j$:
\begin{equation}
    \mcL_{ij} (u, u_i, u_j) = \tfrac12 a_{ij} u^2 + \tfrac12 b_{ij} u_i^2
        - \tfrac12 b_{ji} u_j^2 + c_{ij} u u_i - c_{ji} u u_j
        + d_{ij} u_i u_j \ ,
\label{eq:generallatticelagrange}
\end{equation}
For coefficients, a subscript $i$ indicates dependence on the 
lattice parameter $p_i$, with the ordering of subscripts important.
The 2-form structure requires $a_{ji} = - a_{ij}$, $d_{ji} = - d_{ij}$
($a_{ij}$ and $d_{ij}$ are anti-symmetric under interchange of the parameters).
Our interest is in the subset of Lagrangians that display the
surface independence property in the propagator. We therefore look for conditions
on the Lagrangian such that elementary moves will leave the
contribution to the action (i.e. the exponent in the propagator)
unchanged.
We assume that extenal factors and even volume factors can be 
resolved by renormalisation, so that we only 
consider that part of the propagator in the exponent.


Consider (\ref{eq:generallatticelagrange}) under elementary move (a) - 
shown in figure \ref{fig:elementarya}. The contributions to the
propagator, $K_{(ai)}$ and $K_{(aii)}$, are calculated according to
(\ref{eq:Ka1}) and (\ref{eq:Ka2}). For surface independence, we require
$
K_{(ai)} = K_{(aii)}
$.

$K_{(ai)}$ is calculated via an integral $\ud u$, as in (\ref{eq:Ka1}).
In general, the coefficient of $u$ in the exponent may be either
quadratic, linear, or zero: yielding a Gaussian integral, Dirac delta function,
or volume factor, respectively. 
However, a Dirac delta function would force linear dependence of field variables at different lattice points: since this is undesirable, we exclude this
possibility.
The remaining cases divide on the totally antisymmetric
coefficient $\mca_{ijk}:= a_{ij} + a_{jk} + a_{ki}$ (see \ref{app:surfaceindependenceA} for 
details).
For $\mca_{ijk} \neq 0$ we have a Gaussian integral, and:
\begin{multline}
    K_{(ai),G} = 
         \left( \frac{2 \pi i \hbar}{\mca_{ijk}} \right)^{1/2}
         \exp \frac i{\hbar} \bigg[
                \frac12 \Big(b_{ij} - b_{ik}
                - \frac1{\mca_{ijk}} (c_{ij} - c_{ik})^2
                \Big) u_i^2 + cyclic
 \\ 
                + \Big( d_{ij} 
                - \frac1{\mca_{ijk}} (c_{ij} - c_{ik})(c_{jk} - c_{ji})
                \Big) u_i u_j + cyclic
                \bigg] \ .
\label{eq:elmoveiG}
\end{multline}
Conversely, for $\mca_{ijk} = 0$, we require the integral to reduce
to a volume factor (linear coefficients of $u$ in the exponent must
disappear) requiring the conditions
\begin{equation}
    a_{ij}  =  a_i - a_j
    \ , \quad
    c_{ij}  = c_i
    \ .
\label{eq:elcond1}
\end{equation}
(the coefficient $a_{ij}$ must separate into a part depending
on  $p_i$  and a part depending on  $p_j$ and
$c_{ij}$ is a function of $p_i$ only).
Under these conditions,
\begin{equation}
    K_{(ai),V} = V \ \exp \frac i{\hbar} \left[
                \tfrac12 (b_{ij} - b_{ik}) u_i^2 + cyclic
                + d_{ij} u_i u_j + cyclic
                \right] \ .
\label{eq:elmoveiV}
\end{equation}
This is a critical point of the variation - a volume factor appears
uniquely for this special choice of Lagrangian, which can be written
as:
\begin{eqnarray}
    \mcL_{ij} (u, u_i, u_j) &=&
    \tfrac12 a_i u^2 + c_i u u_i
    - \tfrac12 a_j u^2 - c_j u u_j
    + \tfrac12 (b_{ij} u_i^2 - b_{ji} u_j^2) + d_{ij} u_i u_j \ ,
\notag \\
    &=& A_i (u, u_i) - A_j (u, u_j) + C_{ij} (u_i, u_j) \ ,
\end{eqnarray}
with $C_{ij} (u_i, u_j)$ antisymmetric under interchange of $i$ and $j$.
This is the most general classical Lagrangian 2-form 
(\ref{eq:latticeEL})
as found in \cite{lobb2013variational}, here specialised to the
quadratic case.
So we have two cases for $K_{(ai)}$: (\ref{eq:elmoveiV}) when $\mca_{ijk}=0$,
and (\ref{eq:elmoveiG}) when $\mca_{ijk} \neq 0$.

For $K_{(aii)}$, as in (\ref{eq:Ka2}), we have four integrations
$\ud u_{ij} \ud u_{jk} \ud u_{ki} \ud u_{ijk}$. The integral
$\ud u_{ijk}$ always produces a volume factor due to the
three-point form of the Lagrangian. 
As for $K_{(ai)}$, we wish to avoid these integrals reducing
to a Dirac delta function, and so we have
2 cases. The remaining integrals are either evaluated as three Gaussian
integrations, or one integration reduces to a volume factor. This rests on
the value of $\det A$ (see \ref{app:surfaceindependenceB}
for details):
\begin{equation}
    A = 
        \left( \begin{array}{ccc}
                b_{jk} - b_{ik}  &  d_{ki}  &  d_{jk}   \\
                d_{ki}  &  b_{ki} - b_{ji}  &  d_{ij}  \\
                d_{jk}  &  d_{ij}  &  b_{ij} - b_{kj}
               \end{array}   \right)  \ .
\label{eq:matrixA}
\end{equation}
For $\det A \neq 0$ (equivalently $b_{ij} \neq - d_{ij}$)
we have three Gaussian integrations, producing:
\begin{equation}
    K_{(aii),G} = V \sqrt{ \frac{(2 \pi i \hbar)^3}{\det A}}
                  \exp \left( -\frac i{2 \hbar} \bB^T A^{-1} \bB \right)
                  \exp \frac i{\hbar} \bigg( 
                  \frac12 a_{jk} u_i^2 + cyclic
                  \bigg) \ , 
\label{eq:elmoveiiG}
\end{equation}
where
$
    \bB^T = \left( c_{jk} u_i  - c_{ik} u_j , 
    \textrm{perm } (ijk) , \textrm{perm } (kji) \right) \ 
$.
Alternatively, when $\det A = 0$, evaluating $K_{(aii)}$ requires
two Gaussian integrations. We then require linear terms in the third
integrand to disappear in order to prohibit the appearance of a 
Dirac delta function (see \ref{app:surfaceindependenceB})
hence
we require the conditions
\begin{equation}
    b_{ij} = - d_{ij} 
    \ , \quad
    c_{ij} = c_{ji}   \qquad \forall i,j \ .
\label{eq:elcond2}
\end{equation}
So, $b_{ij}$ is also anti-symmetric, and $c_{ij}$ symmetric.
We can then evaluate $K_{(aii)}$ as:
\begin{equation}
    K_{(aii),V} = \frac{2 \pi \hbar}{(1-\Lambda_{ijk})^{1/2}} V^2 \ 
                 \exp \frac i{\hbar} \bigg[
                 \frac12 a_{jk} u_i^2 + cyclic
                 - \frac12 \frac{d_{ij}}{1-\Lambda_{ijk}}
                 (c_{jk} u_i - c_{ki} u_j)^2 + cyclic
                 \bigg] \ , 
\label{eq:elmoveiiV}
\end{equation}
where we have introduced the totally symmetric parameter
\begin{equation}
    \Lambda_{ijk} := d_{ij} d_{jk} + d_{jk} d_{ki} 
                    + d_{ki} d_{ij} + 1 \ .
\label{eq:ellambda}
\end{equation}
Once more there are two cases. For $K_{(aii)}$, 
when $\det A = 0$, we find (\ref{eq:elmoveiiV}),
and when $\det A \neq 0$ we have (\ref{eq:elmoveiiG}).

Comparing now the two configurations of the elementary move,  we demand 
that the exponents from each configuration be the same;
i.e.  both make the same contribution to the propagator.
More details of this comparison are given in 
\ref{app:surfaceindependenceC}.
We find a solution to the problem
at the critical point of the system: where some of our integrals become singular.
Allowing $\mca_{ijk} =0$ and $\det A=0$,
we compare the exponent in (\ref{eq:elmoveiV}) with (\ref{eq:elmoveiiV}). 
Recalling that at this critical point we have also the conditions
(\ref{eq:elcond1}), (\ref{eq:elcond2}), we find that we require
$
    c_{ij} = c \ 
$, constant,
$
    \Lambda_{ijk} = 1 - c^2 \ 
$, $
    a_{ij} = 0 \ 
$.
Finally, since our Lagrangian is defined only up to an overall multiple,
we let $c=1$. 
We therefore find the unique quadratic Lagrangian:
\begin{equation}
    \mcL_{ij} (u, u_i, u_j) = u(u_i - u_j) - \tfrac12 d_{ij} (u_i - u_j)^2  \ ,
\label{eq:qlatticelagrange2}
\end{equation}
along with the condition on $d_{ij}$ that $\Lambda_{ijk}=0$. 
Comparing (\ref{eq:ellambda}) with (\ref{eq:sijidentity}) we see that we require
precisely $d_{ij} = s_{ij}$.
But then (\ref{eq:qlatticelagrange2}) is uniquely the Lagrangian 
(\ref{eq:qlatticelagrange})! 
We already know from section \ref{sec:elementaryspecial}
that this Lagrangian also exhibits surface independence for the other elementary moves.
This principle of surface independence is then sufficient to determine the 
required Lagrangian uniquely: even more so than in the classical case
(\ref{eq:linearlagrangian2}).

\begin{prop}
The Lagrangian (\ref{eq:qlatticelagrange}) is the unique quadratic
Lagrangian 2-form yielding a surface
independent propagator (\ref{eq:correlation}).
\end{prop}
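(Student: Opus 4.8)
The plan is to exploit the case analysis already assembled for elementary move (a), and to show that the demand of surface independence is maximally constraining precisely at the degenerate point of the associated Gaussian integrations, forcing the coefficients of the general quadratic 2-form (\ref{eq:generallatticelagrange}) into the unique shape (\ref{eq:qlatticelagrange}). Move (a) of figure \ref{fig:elementarya} is the operative one: it relates a single-integration configuration (i) to a four-integration configuration (ii), and its resolution alone is expected to pin down all the coefficients, after which the remaining moves merely confirm consistency.

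First I would start from (\ref{eq:generallatticelagrange}) with the 2-form antisymmetry conditions $a_{ji} = -a_{ij}$, $d_{ji} = -d_{ij}$, and impose $K_{(ai)} = K_{(aii)}$ at the level of exponents. For configuration (i) the $\ud u$ integration splits on the totally antisymmetric combination $\mca_{ijk}$: a nonzero value gives the Gaussian exponent (\ref{eq:elmoveiG}), whereas $\mca_{ijk}=0$ together with the separation conditions (\ref{eq:elcond1}) yields the volume-factor form (\ref{eq:elmoveiV}). For configuration (ii), after the $\ud u_{ijk}$ integration produces a volume factor from the three-point structure, the surviving three integrations are governed by $\det A$ of (\ref{eq:matrixA}): generically Gaussian (\ref{eq:elmoveiiG}), but degenerate when $\det A = 0$ (equivalently $b_{ij} = -d_{ij}$), which forces the further conditions (\ref{eq:elcond2}) and the volume-factor form (\ref{eq:elmoveiiV}).

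The crux is to recognise that the meaningful solution lives at the joint singular point $\mca_{ijk}=0$, $\det A = 0$, so I would compare the volume-factor exponents (\ref{eq:elmoveiV}) and (\ref{eq:elmoveiiV}) term by term. Matching coefficients of $u_i^2$ and $u_i u_j$ while carrying along (\ref{eq:elcond1}) and (\ref{eq:elcond2}) collapses the residual freedom to $c_{ij} = c$ (a single constant), $a_{ij} = 0$, and $\Lambda_{ijk} = 1 - c^2$. Using the overall scaling freedom to set $c = 1$ then gives (\ref{eq:qlatticelagrange2}) subject to $\Lambda_{ijk} = 0$; comparing this constraint with the parameter identity (\ref{eq:sijidentity}) fixes $d_{ij} = s_{ij}$ exactly, recovering (\ref{eq:qlatticelagrange}). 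To close, I would invoke Section \ref{sec:elementaryspecial}, where this Lagrangian was already shown surface-independent under moves (b), (c) and the pop-up cube, so no new constraints arise and the full local-move set is consistent.

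The main obstacle I expect is the careful bookkeeping at the degenerate point. Away from $\det A = 0$ and $\mca_{ijk}=0$ the Gaussian integrations are routine, but it is exactly the vanishing of these determinants that singles out the critical Lagrangian, and there the naive Gaussian formula must be replaced by the volume-factor evaluation, with a verification that the residual integrand is at most linear. Excluding the competing Dirac-delta branch — which would impose an unwanted \emph{linear} dependence among field variables at distinct lattice sites — is the essential input, and keeping the exponent-matching coherent across both degenerate configurations is the delicate step that ultimately delivers uniqueness.
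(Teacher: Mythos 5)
Your proposal follows essentially the same route as the paper: uniqueness is extracted from elementary move (a) alone by comparing the two configurations at the joint degenerate point $\mca_{ijk}=0$, $\det A=0$ (excluding the Dirac-delta branch), which forces conditions (\ref{eq:elcond1}), (\ref{eq:elcond2}), then $c_{ij}=c$, $a_{ij}=0$ and $\Lambda_{ijk}=0$ so that $d_{ij}=s_{ij}$, with the remaining moves covered by the earlier surface-independence proposition. This matches the paper's argument in both structure and detail, including its (shared) informal treatment of the non-degenerate branches.
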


\begin{pf}
(\ref{eq:qlatticelagrange2}), with the restriction $\Lambda_{ijk}=0$
(\ref{eq:ellambda}), gives  us that this
 is the unique Lagrangian exhibiting
surface independence for elementary move (a).
We also have from proposition \ref{prop:latticeindependent}
that Lagrangian (\ref{eq:qlatticelagrange}) has surface independence 
under all other elementary moves.
\end{pf}

\subsection{Quantum Variational Principle: Lagrangian 2-form case}

This result suggests a quantum variational principle in analogy to the one
dimensional case of section \ref{sec:qvariation1d}.
We consider the propagator over a discrete surface $\sigma$,
$K_{\sigma} (\partial \sigma)$,
defined in (\ref{eq:correlation}). 
We have shown that, for the special choice of Lagrangian (\ref{eq:qlatticelagrange}),
the propagator $K_{\sigma} (\partial \sigma)$ is \emph{independent}
of the surface $\sigma$. It depends only on the variables sitting
on the boundary, $\partial \sigma$.
Additionally, this is a very unique choice of Lagrangian:
for a generic Lagrangian, $K_{\sigma} (\partial \sigma)$
will depend also on the surface $\sigma$ itself.

Recall that, classically, the Lagrangian 2-form structure arises
from a variational principle \emph{over surfaces} as in \cite{lobb2013variational}. 
An extended set of Euler-Lagrange equations arise as we vary not only the 
dependent field variables $u_{\bn}$, but also the surface $\sigma$.
This restricts the class of admissible Lagrangians to those
obeying the closure property (\ref{eq:closure}): it is only for
such Lagrangians and equations of motion that the classical action
remains invariant under variations of the surface.

As we move to the quantisation, parallel to what we argued in the 1-form case, we consider the variation 
over all possible surfaces $\sigma$ with a fixed boundary $\partial \sigma$.
For a generic Lagrangian, as we vary the surface $\sigma$ the 
propagator $K_{\sigma} (\partial \sigma)$ \eqref{eq:correlation} changes.
However, for the special ``integrable'' choice of Lagrangians \eqref{eq:qlatticelagrange}
the propagator $K_{\sigma} (\partial \sigma)$ remains unchanged 
as we vary the surface. This therefore represents a critical (i.e., singular) point 
for a new quantum object which we conjecture to be a ``sum over all surfaces'' of which the 
surface-dependent propagator forms the summand\footnote{The sum over surfaces idea has also emerged in the 
theory of loop quantum gravity but with a different motivation, cf. \cite{reisenberger1997sum,reisenberger1997left}.}, viewed as a functional in a well-chosen 
space of Lagrange functions. Once again, controlling the singular behaviour of such an object, 
and arriving at mathematically concise definition is the subject of ongoing investigation. 
Nonetheless, we conjecture that critical/singular point analysis of such an object, leading to the 
selection of Lagrangians whose propagator are surface-independent, would form a key ingredient for 
understanding the path integral quantisation of discrete field theories that are integrable in the 
sense of multidimensional consistency.


\section{Discussion}

In his seminal paper of 1933, \cite{dirac1933lagrangian}, Paul Dirac expressed his \textit{credo} that the Lagrangian formulation of 
classical dynamics, in comparison to the Hamiltonian one, was more fundamental, and he posed the question of a Lagrangian approach to quantum mechanics. In this important precursor to Feynman's development of the path integral \cite{feynman1948space} 
the analogy between classical and quantum mechanics was emphasized, cf. also \cite{dirac1945analogy}. In this context, the related question of 
what would constitute a variational point of view in quantum mechanics was partly, but not fully, answered by those approaches.   
In the present paper we have attempted to arrive to a more complete answer to these questions in the context of integrable systems in 
the sense of multidimensional consistency. This is pursued by setting up a quantum analogue of the Lagrangian multi-form approach.  

The main result, obtained within the context of quadratic Lagrangians, is that there is a quantum analogue 
of the closure property of \cite{lobb2009lagrangian} which 
underlies the classical multiform theory. The quantum analogue is formulated in terms of the multi-time 
propagators for these models, cf. eq. (\ref{eq:gammapropagator}). 

There are a number of points to make in connection with the results obtained in this study. 

First, although the results were obtained by restricting ourselves to only quadratic Lagrangians, 
the multidimensional consistency aspects do not essentially rely on the linearity of the equations. In fact, most of the combinatorics at the classical level  
carries through for all Lagrangians associated with nonlinear quad equations in the ABS list, cf. \cite{xenitidis2011lagrangian}. Due to the suspected close 
analogy between classical theory and quantum theory in the integrable case, it is therefore to be expected that some quantization procedure for those 
models would exist such that the results obtained here also carry through to the quantum level for those nonlinear models. This may, however, require 
non-conventional quantization prescriptions in terms of suitable integrals replacing the Gaussian integrals used in the quadratic case. Initial results 
along this direction were obtained in \cite{field2005thesis} and \cite{field2006time}. The choice of Hilbert space (in the canonical 
quantization picture), and of integration measure (in the path integral picture) may be driven by the integrable combinatorics of those models.    

Second, another general feature of the models in question is the role-reversal interplay between parameters and independent variables and between the 
discrete and continuous models. Thus, the continuous models do not only appear as continuum limits, but more intrinsically as additional commuting 
flows: the classical equations hold simultaneously on a common set of solutions. On the quantum level this property extends in the fact that there is a 
common propagator of the underlying continuous and discrete quadratic models. If this feature is general enough to extend to the nonlinear case 
(which it does in the classical case) there is scope that this property can eventually be used to extract information on the time-sliced path integral 
from the discrete finite-step path integral. 

Third, turning things around and imposing the path and surface independence of the propagator for a general parameter class of quadratic Lagrangians, we have 
shown that this quantum MDC property leads uniquely to the Lagrangians that arise from the integrable case, in the same spirit as in \cite{lobb2013variational}. 
In fact, the point made in that paper is that the Lagrangians themselves should be viewed as solutions of an extended set of Euler-Lagrange equations, 
which incorporates the stationarity under variations with respect to both the field (i.e., dependent) variables as well as the geometry in the independent variables. 
This poses a new paradigm in variational calculus, as it signifies a departure from the conventional point of view of most physical theories, namely 
that Lagrangians have to be chosen based on tertiary considerations. In this new point of view, the Lagrangians are not necessarily given in advance, but follow from the 
variational principle itself.    

We finish by making a few general remarks on further ramifications. In general it is not known how to derive a path integral formalism for non-conventional, i.e. 
non-Newtonian models, through a time-slicing procedure when Gaussian integrals no longer apply. Nonetheless, in integrable systems 
theories such non-Newtonian models do abundantly appear and often can also be readily quantized through the canonical formalism, e.g. the relativistic many-body systems of 
Ruijsenaars-Schneider type, \cite{ruijsenaars1986new}. This poses, in our view, a lacuna in the theory which is imperative to rectify as such integrable quantum systems 
cannot be simply discarded as potentially physical models. Thus, integrable systems can play a role of a litmus test for the completeness of a theory, which most reasonably should 
be applicable to those models for which in principle exact and rigorous computations can be performed. However, one may speculate that there is a deeper significance 
for those systems, since they have proved their merit in forming a fruitful breeding ground for new concepts and new understandings on a fundamental level. 
In fact, the ideas exposed in the present paper, based on simple toy prolems, have some interesting resemblances to proposals that 
that in recent years have been put forward on the quantization of scaling invariant theories \cite{rovelli2011discretizing,rovelli2011structure,bahr2011perfect}.  
A particular parallel may be drawn between path and surface independence of propagators in our examples, and certain formulations of loop quantum 
gravity and "sum over surfaces", \cite{reisenberger1997sum,reisenberger1997left}. Furthermore, the interplay between discrete and continuous, which is prominent 
in our examples, may perhaps feed into views that G.'t Hooft has been promoting with regard to the quantum nature of the universe, cf. \cite{thooft2014cellular}.

\section*{Acknowledgments}

Steven King would like to thank EPSRC for funding this research.
Frank Nijhoff was partly supported by EPSRC grant EP/I038683/1. 
FWN acknowledges early discussions with Sikarin Yoo-Kong and Chris Field on related 
topics.


\appendix

\section{Calculating the Discrete Propagator}
\label{app:discretepropagator}

\subsection{The Classical Action}

We wish to evaluate the classical action for the path beginning at $x_0$, and reaching
$x_N$ after $N$ time steps. Recalling our discrete equation of motion 
(\ref{eq:p=1.5xb}) and classical solution 
we have the classical path
\begin{equation}
     x_n = 
     \frac1{\sin \mu N} \big(x_N \sin \mu N - x_0 \sin \mu (n-N) \big) \ .
\end{equation}
The action along the classical path is then:
\begin{equation}
    \mathcal S_{cl} = \sum_{n=0}^{N-1} \left( \frac{P+Q}q x_n x_{n+1}
                 + \frac{P-Q}{2q} (x_n^2 + x_{n+1}^2)   \right)  
    =
    \frac{\sqrt P}{\sin \mu N}
         \left[ 2x_0 x_N - (x_0^2 + x_N^2) \cos \mu N \right] \ ,
\end{equation}
where we have used the identities:
\begin{equation}
    \cos \mu = -b = - \frac{P-Q}{P+Q} 
\ , \quad
    \sin \mu = \frac{2q \sqrt P}{P+Q} \ .
\end{equation}
We note two things about this result. First, there is no explicit $Q$ dependence: 
all $Q$ dependence is contained within the parameter $\mu$, which only appears
as $\mu N$. Second, we can easily extend this result to the $\mathcal L_a$ (bar evolution)
case, by a change of parameter. We replace $\mu$ by $\eta$, such that
$\cos \eta = -a$.

\subsection{The discrete propagator}

It is left for us to evaluate the discrete path integral:
\begin{equation}
    \tilde K_N (0, 0) = \int_{y(0)=0}^{y(N)=0} \mathcal D[y_n] 
                               \ e^{i \mathcal S[y_n]/\hbar} \ .
\end{equation}
In the discrete case, we can consider this via a time slicing procedure without
needing to worry about the problematic shrinking to zero. So we consider:
\begin{equation}
    \tilde K_N(0,0) = \mathcal N \int \ud y_1 \ldots \int \ud y_{N-1}  
         \   \exp \left\{ \frac i{\hbar q} 
            \sum_{n=0}^{N-1} \left( (P+Q) y_n y_{n+1}
             + \frac12 (P-Q) (y_n^2 + y_{n+1}^2)
              \right)  \right\}  \ ,
\end{equation}
where $\mcN$ is the normalising factor appearing in (\ref{eq:discretepathintegral})
and $y_0 = y_N = 0$.
This expression is quadratic in all $y_n$ variables, and so can be evaluated
as $N-1$ Gaussian integrals. This is most easily achieved by writing the
equation in a matrix form (as in \cite{schulman2005techniques}, for example).
We define $ \mathbf y^T = (y_1, \ldots, y_{N-1})$, in order to write
\begin{equation}
    \tilde K_N = \mathcal N \int \ud^{N-1} \mathbf y \ 
                       \exp (- \mathbf y^T \sigma \mathbf y) 
\ 
    =  \frac{\pi^{(N-1)/2}}{\sqrt{\det \sigma}} \ ,
\end{equation}
with $\sigma$ the symmetric, tri-diagonal matrix:
\begin{equation}
    \sigma = \frac{i(P+Q)}{\hbar q}
        \begin{pmatrix}
             -\frac{P-Q}{P+Q}  &  -1/2   &  &   \\
             -1/2   &  -\frac{P-Q}{P+Q}  & \ddots  &   \\
             &  \ddots  & \ddots  & -1/2  \\ 
             &  &  -1/2   & -\frac{P-Q}{P+Q} 
        \end{pmatrix} \ .
\end{equation}
Hence it remains to calculate $\det \sigma$.
The determinant for a tri-diagonal matrix can be found by forming a recursion
relation on the size of the matrix, and solving as a discrete equation.
Let
\begin{equation}
    X_n =  \begin{vmatrix}
                 a & b & & \\
                 b & a & \ddots & \\
                 & \ddots & \ddots & b \\
                 & & b & a 
              \end{vmatrix} \quad
      \textrm{of size} \ n \ .
\end{equation}
Performing the cofactor expansion, we find
\begin{equation}
     X_n = a X_{n-1} - b^2 X_{N-2} \ , 
\end{equation}
with initial conditions $X_1 = a$ and $X_2 = a^2 - b^2$. The solution is 
thus given by
\begin{equation}
    X_n = \frac1{2^n} \Bigg[ 
        \left( a + \frac{a^2 - 2b^2}{\sqrt{a^2 - 4b^2}} \right)
        \left( a + \sqrt{a^2 - 4b^2} \right)^{n-1}  
        + \left( a - \frac{a^2 - 2b^2}{\sqrt{a^2 - 4b^2}} \right)
        \left( a - \sqrt{a^2 - 4b^2} \right)^{n-1}
      \Bigg] \ . 
\end{equation}
Now, in the case of $\sigma$, recall that $a = -(P-Q)/(P+Q) = \cos \mu$
and $b = -1/2$, so that $\sqrt{a^2 - 4b^2} = i \sin \mu$: this leads to
significant simplifications of the above expression. Working through
these calculations, we then find:
\begin{equation}
    \det \sigma = \left( \frac{i(P+Q)}{2 \hbar q} \right)^{N-1}
                          \frac{\sin \mu N}{\sin \mu} \ . 
\end{equation}

Putting this together, then,
\begin{equation}
    \tilde K_N = 
    \left( \frac{i(P+Q)}{2 \pi \hbar q} \right)^{N/2}
      \left( \frac{2 \pi \hbar q}{i(P+Q)} \right)^{(N-1)/2}
                          \sqrt{\frac{\sin \mu}{\sin \mu N}} \ ,
\end{equation}
and therefore
\begin{equation}
    K_N (x_0, x_N) =  
    \left( \frac{i \sqrt P}{\pi \hbar \sin \mu N} \right)^{1/2}
                \       \exp \left\{
                          \frac{i \sqrt P}{\hbar \sin \mu N}
         \left( 2x_0 x_N - (x_0^2 + x_N^2) \cos \mu N \right) \right\} \ . 
\end{equation}

\section{Quantum Invariants}
\label{app:quantuminvariants}

In \cite{field2006time}, the authors investigated quantum systems possessing 
invariants under a one time-step path integral evolution.
Begin by considering the evolution in the hat direction,
generated by $\mathcal L_b (x, \oh x)$ (\ref{eq:p=1.5lagrangians}).
A wavefunction $\psi_n (x)$ evolves under this tranformation according to
\begin{equation}
    \psi_{n+1} (\oh x) = \mathcal N \int_C \exp \left( \frac i{\hbar}
                          \mathcal L_b (x, \oh x) \right) \psi_n (x) \ud x \ ,
\end{equation}
and to look for an invariant we desire $\psi_n$ and $\psi_{n+1}$ to be 
solutions of the same eigenvalue problem, with the same eigenvalue:
\begin{equation}
    M_x \psi_n (x) = E \psi_n (x) \quad
    \Rightarrow \quad
    M_{\oh x} \psi_{n+1} (\oh x) = E \psi_{n+1} (\oh x) \ .
\end{equation}
$M_x$ is a differential operator, and we restrict to considering the second order
case:
\begin{equation}
    M_x = p_0 (x) \frac{\partial^2}{\partial x^2}
                + p_1 (x) \frac{\partial}{\partial x} + p_2(x) \ .
\end{equation}
Now, 
\begin{equation}
    E \psi_{n+1} (\oh x) =
        \mathcal N \int_C \exp \left( \frac i{\hbar}
        \mathcal L_b (x, \oh x) \right) \left( M_x \psi_n (x) \right) \ud x 
\  
    =
    \mathcal N \int_C \left( \ob M_x \exp \left( \frac i{\hbar}
               \mathcal L_b (x, \oh x) \right) \right) \psi_n (x) \ud x \  
                + \mathcal S \ ,
\end{equation}
where $\ob M_x$ is an adjoint to $M_x$ constructed under integrations by parts,
and $\mathcal S$ is the resulting surface term.
If we assume $\psi_n$ and $\psi_n'$ to vanish at infinity (a reasonable physical assumption)
then the surface term $\mathcal S$ vanishes.
We can also write,
\begin{equation}
    E \psi_{n+1} (\oh x) =  
    M_{\oh x} \psi_{n+1} (\oh x) 
\ 
    =
    \mathcal N \int_C \left(  M_{\oh x} \exp \left( \frac i{\hbar}
        \mathcal L_b (x, \oh x) \right) \right)  \psi_n (x) \ \ud x \ .
\end{equation}
So the condition we require is for 
$\ob M_x \exp \left( \frac i{\hbar} \mathcal L_b (x, \oh x) \right) 
= M_{\oh x} \exp \left( \frac i{\hbar} \mathcal L_b (x, \oh x) \right) \ $. 
Following the analysis in \cite{field2006time}, and using the given Lagrangian,
we find this can only hold under the restrictions:
\begin{equation}
    p_0 (x) = - \hbar^2 C_0 \ , \quad
    p_1 (x) \equiv 0 \ , \quad
    p_2 (x) = 4P C_0 x^2 + C_2 \ ,
\end{equation}
so that
\begin{equation}
    M_x = C_0 \left(- \hbar^2 \frac{\partial^2}{\partial x^2} 
                 + 4P x^2 \right) + C_2 \ .
\end{equation}
This is precisely the quantum invariant (\ref{eq:qinvariant}).

\section{Path Independence for a General Lagrangian}
\label{app:pathindepgeneral}

We calculate the propagators (\ref{eq:Klrcorner}) and (\ref{eq:Kulcorner})
by a Gaussian integral:
\begin{multline}
    K_{\lrcorner} (x, \oh {\ob x}) =
    \mcN_{\lrcorner}
    \left( \frac{\pi i \hbar}{\beta b_0 + \alpha (a-a_0)} \right)^{1/2}
          \exp \Bigg\{ \frac i{\hbar} \Bigg[  
            \left( \beta (b-b_0) - \frac{\beta^2}
                       {4 (\beta b_0 + \alpha (a-a_0))} \right) x^2
\\
            + \left( \alpha a_0 - \frac{\alpha^2}
                         {4 (\beta b_0 + \alpha (a-a_0))} \oh {\ob x}^2  \right)
            - \frac{\alpha \beta}{2 (\beta b_0 + \alpha (a-a_0))} x \oh {\ob x}
                    \Bigg] \Bigg\} \ ,
\label{eq:Klrgen}
\end{multline}
and,
\begin{multline}
    K_{\ulcorner} (x, \oh {\ob x}) = 
    \mcN_{\ulcorner}
    \left( \frac{\pi i \hbar}{\alpha a_0 + \beta (b-b_0)} \right)^{1/2}
          \exp \Bigg\{ \frac i{\hbar} \Bigg[  
            \left( \alpha (a-a_0) - \frac{\alpha^2}
                       {4 (\alpha a_0 + \beta (b-b_0))} \right) x^2
\\
           + \left( \beta b_0 - \frac{\beta^2}
                         {4 (\alpha a_0 + \beta (b-b_0))} \oh {\ob x}^2  \right)
            - \frac{\alpha \beta}{2 (\alpha a_0 + \beta (b-b_0))} x \oh {\ob x}
                    \Bigg] \Bigg\} \ .
\label{eq:Kulgen}
\end{multline}
By comparing the coefficients of $x^2$, $\oh{\ob x}^2$ and $x \oh {\ob x}$ in the 
exponent, we derive conditions for time-path-independence on our coefficients:
\begin{equation}
    \beta (b-b_0) - \frac{\beta^2}{4 (\beta b_0 + \alpha (a-a_0))}
      =
    \alpha (a-a_0) - \frac{\alpha^2}{4 (\alpha a_0 + \beta (b-b_0))} \ , 
\end{equation}
\begin{equation}
    \alpha a_0 - \frac{\alpha^2}{4 (\beta b_0 + \alpha (a-a_0))}
      =
    \beta b_0 - \frac{\beta^2}{4 (\alpha a_0 + \beta (b-b_0))} \ , 
\end{equation}
\begin{equation}
    \frac{\alpha \beta}{2 (\beta b_0 + \alpha (a-a_0))}
      =
    \frac{\alpha \beta}{2 (\alpha a_0 + \beta (b-b_0))}
\label{eq:qcompcond3} \ .
\end{equation}
Note that an immediate consequence of (\ref{eq:qcompcond3}) is that the multiplicative
factors in (\ref{eq:Klrgen}) and (\ref{eq:Kulgen}) are the same. Analysis of
these three conditions leads to (\ref{eq:pathindepconditions}).

\section{Elementary Moves}
\label{app:elmovea}

We consider elementary move (a), shown in figure \ref{fig:elementarya},
in more detail as an illustrative case.
The action and contributions to the propagator for figures
\ref{fig:elementarya}(i) and (ii) are given in
(\ref{eq:Ka1}) and (\ref{eq:Ka2}).
We then have
\begin{eqnarray}
    K_{(ai)} &=& \int \ud u \exp \frac i{\hbar} \bigg(
                   u (u_i - u_j) - \tfrac12 s_{ij} (u_i - u_j)^2
   \notag   \\   && \ \ 
                   + u (u_j - u_k) - \tfrac12 s_{jk} (u_j - u_k)^2
                   + u (u_k - u_i) - \tfrac12 s_{ki} (u_k - u_i)^2
                      \bigg) \ , 
    \notag   \\  
    &= & V \exp \frac{-i}{2 \hbar}
             \bigg( s_{ij} (u_i - u_j)^2 + s_{jk} (u_j - u_k)^2
                   +  s_{ki} (u_k - u_i)^2
                      \bigg)  \ ,
\label{eq:Kai}
\end{eqnarray}
where we note that all the $u$ terms have cancelled out, leaving
a volume factor.
We compare this to
\begin{eqnarray}
    K_{(aii)} &=& \iiiint \ud u_{ij} \ud u_{jk} \ud u_{ki} \ud_{ijk}
     \exp \frac i{\hbar} \bigg(
                   u_k (u_{ki} - u_{jk}) - \tfrac12 s_{ij} (u_{ki} - u_{jk})^2
   \notag   \\   && \ \ 
                   + u_i (u_{ij} - u_{ki}) - \tfrac12 s_{jk} (u_{ij} - u_{ki})^2
                   + u_j (u_{jk} - u_{ij}) - \tfrac12 s_{ki} (u_{jk} - u_{ij})^2
                      \bigg) \ , 
    \notag \\
    &= & V \int \ud^3 \bu \
     \exp \frac i{\hbar} \left(- \frac12 \bu^T A \bu + \bB^t \bu
       \right) \ , 
\end{eqnarray}
where
\begin{eqnarray}
    \bu^T &=& (u_{ij}, u_{jk}, u_{ki}) \ ,
\notag  \\
    A &=& 
        \left( \begin{array}{ccc}
                s_{jk} + s_{ki}  &  -s_{ki}  &  -s_{jk}   \\
                -s_{ki}  &  s_{ki} + s_{ij}  &  -s_{ij}  \\
                -s_{jk}  &  -s_{ij}  &  s_{ij} + s_{jk})
               \end{array}   \right)  \ ,
\\
    \bB^T &=& \left( u_i  - u_j , u_j - u_k , u_k - u_i \right) \ .
\notag
\end{eqnarray}
Critically, we note that $\det A = 0$, so again we have a singular integral.
Carrying out two integrals in turn, so that the third integration produces a
volume factor, we therefore have:
\begin{equation}
    K_{(aii)} =  V^2 2 \pi \hbar  \ 
     \exp \frac{-i}{2 \hbar}
             \bigg( s_{ij} (u_i - u_j)^2 + s_{jk} (u_j - u_k)^2
                   +  s_{ki} (u_k - u_i)^2
                      \bigg)  \ .
\end{equation}
Thus, the \emph{exponents} in $K_{(ai)}$ and $K_{(aii)}$ are the same.
With the correct choice of normalisation and regularisation, we have
identical contributions to the propagator.

\section{Uniqueness of the Surface Independent Lagrangian}
\label{app:surfaceindependence}

\subsection{Elementary move (a), configuration (i)}
\label{app:surfaceindependenceA}

For Lagrangian (\ref{eq:generallatticelagrange}), the expression
for $K_{(ai)}$ is shown in figure \ref{fig:elementarya} and
given by (\ref{eq:Ka1}). We then have:
\begin{multline}
    K_{(ai)}
                = \int \ud u \ \exp \frac i{\hbar} \Big[
                \tfrac12 (a_{ij} + a_{jk} + a_{ki}) u^2
                + \big( (c_{ij} - c_{ik}) u_i + (c_{jk} - c_{ji}) u_j
                       + (c_{ki} - c_{kj}) u_k  \big)  u  \Big]
\\ \times
                \exp \frac i{\hbar} \Big[
                \tfrac12 (b_{ij} - b_{ik}) u_i^2 + cyclic
                + d_{ij} u_i u_j + cyclic
                \Big] \ .
\label{eq:generalpopa}
\end{multline}
This integral is Gaussian providing the coefficient of $u^2$ does
not vanish; i.e. $\mca_{ijk} \neq 0$. In that case the integral
yields (\ref{eq:elmoveiG}).

The other case occurs when 
$
\mca_{ijk} = 0 
\ \Rightarrow \ 
a_{ij}  =  a_i - a_j
$.
 To avoid the integral
producing a delta function (which would threaten the independence
of our field variables) we then also require terms linear in $u$ to
vanish, so that 
$
c_{ij} - c_{ik}  =  0 
      \ \forall i,j,k
\ 
\Rightarrow \ 
      c_{ij}  =  c_i
$.
In other words, $c_{ij}$ must be a function of $p_i$ only.
These are precisely the conditions (\ref{eq:elcond1}).
If these conditions hold, we are left with the contribution to
the propagator (\ref{eq:elmoveiV}).

 \subsection{Elementary Move (a), configuration (ii)}
 \label{app:surfaceindependenceB}

For $K_{(aii)}$ in figure \ref{fig:elementarya}(ii), we have a
contribution to the propagator given by (\ref{eq:Ka2}). For
Lagrangian (\ref{eq:generallatticelagrange}) this gives us:
\begin{equation}
    K_{(aii)} 
           =  V \iiint \ud^3 \bu \
                \exp \frac i{\hbar} \bigg(\frac12 \bu^T A \bu 
                + \bB^t \bu \bigg)
                \exp \frac i{\hbar}
                \bigg[ \frac12 (a_{jk} u_i^2 + cyclic)
                \bigg] \ , 
\label{eq:generalpopb}
\end{equation}
with $A$ and $\bB$ as in (\ref{eq:matrixA}) and (\ref{eq:elmoveiiG}),
and
$
    \bu^T = (u_{ij}, u_{jk}, u_{ki})
$.
Clearly, when $\det A \neq 0$ this can be evaluated as a trio of
Gaussian integrals, giving (\ref{eq:elmoveiiG}). We must consider
the critical point $\det A = 0$ separately.
The condition $\det A=0$ is a \emph{functional equation} connecting
the $b_{ij}$ with the $d_{ij}$. Considering the rows of $A$ in 
(\ref{eq:matrixA}), it is clear that $\det A=0$ if
$
    b_{ij} = - d_{ij}
$,
our first condition of (\ref{eq:elcond2}). In this case we must carry out 
the two remaining Gaussian integrals in turn.
First, integrating over $\ud u_{ij}$
in (\ref{eq:generalpopb}):
\begin{multline}
    K_{(aii),V} = V \left( \frac{2 \pi \hbar}{i(d_{jk} + d_{ki})} \right)^{1/2}
                \iint \ud u_{jk} \ud u_{ki} \ 
                \exp \frac i{\hbar} \bigg[
                \frac12 \frac{1-\Lambda_{ijk}}{d_{jk} + d_{ki}}
                (u_{jk} - u_{ki})^2
\\
                + \Big( c_{ki} u_j - c_{ji} u_k 
                + \frac{d_{ki}}{d_{jk} + d_{ki}}
                (c_{jk} u_i - c_{ik} u_j) \Big)  
                (u_{jk} - u_{ki})
\\
                + \Big( 
                (c_{jk} - c_{kj}) u_i 
                + (c_{ki} - c_{ik}) u_j
                + (c_{ij} - c_{ji}) u_k
                \Big)  u_{ki}
                \bigg]
\\
                \exp \frac i{\hbar} \bigg[
                \frac12 \Big(
                a_{jk} u_i^2 + cyclic
                + \frac1{d_{jk} + d_{ki}} (c_{jk} u_i - c_{ik} u_j)
                \Big)  \bigg]        \ ,
\end{multline}
with $\Lambda_{ijk}$ given in (\ref{eq:ellambda}).
Here it is clear that we can shift our integration
by the subsitution $v = u_{jk} - u_{ki}$. 
Thus, to avoid a delta function integral for
$\ud u_{ki}$ and gain the volume factor we desire, we require also all 
terms linear in $u_{ki}$ in the exponent to vanish. Hence we require:
$
     c_{ij} - c_{ji} = 0  \ \forall i,j 
$.
This is the second condition of (\ref{eq:elcond2}).
Evaluation of the second Gaussian integral then gives us
(\ref{eq:elmoveiiV}).

\subsection{Elementary move (a): comparing results}
\label{app:surfaceindependenceC}

In the generic case ($\mca_{ijk} \neq 0$, $\det A \neq 0$)
we compare equation (\ref{eq:elmoveiG}) with (\ref{eq:elmoveiiG}). Comparing 
coefficients of $u_i^2$ and $u_i u_j$ in the exponent, this 
gives the functional equations:
\begin{multline}
    b_{ij} - b_{ik}
    - \frac1{\mca_{ijk}} (c_{ij} - c_{ik})^2
    =
    a_{jk} + \frac1{\det A} \Big\{ \big(
    d_{ij}^2 - (b_{ki} - b_{ji})(b_{ij} - b_{kj})
    \big) c_{jk}^2
\\
    + \big(
    d_{ki}^2 - (b_{jk} - b_{ik})(b_{ki} - b_{ji})
    \big) c_{kj}^2
    + 2 \big(
    d_{ij} d_{ki} - d_{jk} (b_{ki} - b_{ji})
    \big) c_{jk} c_{kj}
    \Big\}    \ ,
\label{eq:appgenerallattice1}
\end{multline}
\begin{multline}
    d_{ij} 
    - \frac1{\mca_{ijk}} (c_{ij} - c_{ik})(c_{jk} - c_{ji})
\\
    =
    \frac1{\det A} \bigg[
    \big((b_{ki} - b_{ji})(b_{ij} - b_{kj}) 
    - d_{ij}^2 \big) c_{jk} c_{ik}
    - \big(d_{ij} d_{jk} - d_{ki} (b_{ij} - b_{kj})
    \big) c_{jk} c_{ki}
\\
    + \big(d_{jk} d_{ki} - d_{ij} (b_{jk} - b_{ik})
    \big) c_{ki} c_{kj}
    - \big(d_{ij} d_{ki} - d_{jk} (b_{ki} - b_{ji})
    \big) c_{ik} c_{kj}
    \bigg]   \ .
\label{eq:appgenerallattice2}
\end{multline}
It is not at all obvious that a solution to these equations, under the constraints, exists.

However, in the special case $\mca_{ijk} = 0$, $\det A = 0$
we compare the exponent in (\ref{eq:elmoveiV}) with (\ref{eq:elmoveiiV}).
This gives equations from the 
coefficients of $u_i^2$ and $u_i u_j$:
\begin{equation}
    b_{ij} - b_{ik}
    =
    a_{jk}
    - \frac1{1-\Lambda_{ijk}}
    (d_{ij} + d_{ki}) c_{jk}^2 
\ , \quad
    d_{ij}
    =
    \frac{d_{ij}}{1-\Lambda_{ijk}}
    c_{jk} c_{ki}
\ . 
\end{equation}
Combined with the constraints (\ref{eq:elcond1}), (\ref{eq:elcond2}),
this yields the Lagrangian (\ref{eq:qlatticelagrange2}).

\bibliographystyle{unsrt}
\bibliography{linearkdvrefs}
\end{document}